\DeclareMathOperator*{\argmax}{arg\,max}
\newtheorem{remark}{Remark}
\newtheorem{theorem}{Theorem}
\newtheorem{lemma}{Lemma}
\newtheorem{corollary}{Corollary}
\newtheorem{proposition}{Proposition}
\def\expandafter\normalsize\expandafter{%
    \normalsize%
    \setlength\abovedisplayskip{5pt}%
    \setlength\belowdisplayskip{5pt}%
    \setlength\abovedisplayshortskip{2pt}%
    \setlength\belowdisplayshortskip{2pt}%
}
\title{Beamfocusing Optimization for Near-Field Wideband Multi-User Communications}
\author{
        Zhaolin~Wang,~\IEEEmembership{Graduate Student Member,~IEEE,}
        Xidong~Mu,~\IEEEmembership{Member,~IEEE,} \\
        and Yuanwei~Liu,~\IEEEmembership{Follow,~IEEE}
        
\thanks{An earlier version of this paper was presented in part at the IEEE Global Communications Conference, Kuala Lumpur, Malaysia, 2023 \cite{conference_version}.
        The authors are with the School of Electronic Engineering
and Computer Science, Queen Mary University of London, London E1 4NS, U.K. (e-mail: \{zhaolin.wang, xidong.mu, yuanwei.liu\}@qmul.ac.uk).}
\vspace{-0.3cm}
}
\begin{document}

\maketitle
\begin{abstract}
    
    A near-field wideband communication system is investigated in which a base station (BS) employs an extra-large scale antenna array (ELAA) to serve multiple users in its near-field region. To facilitate near-field multi-user beamforming and mitigate the spatial wideband effect, the BS employs a hybrid beamforming architecture based on true-time delayers (TTDs). In addition to the conventional fully-connected TTD-based hybrid beamforming architecture, a new sub-connected architecture is proposed to improve energy efficiency and reduce hardware requirements. Two wideband beamforming optimization approaches are proposed to maximize spectral efficiency for both architectures. 1) Fully-digital approximation (FDA) approach: In this method, the TTD-based hybrid beamformer is optimized by the block-coordinate descent and penalty method to approximate the optimal digital beamformer. This approach ensures convergence to the stationary point of the spectral efficiency maximization problem. 2) Heuristic two-stage (HTS) approach: In this approach, the analog and digital beamformers are designed in two stages. In particular, two low-complexity methods are proposed to design the high-dimensional analog beamformers based on approximate and exact line-of-sight channels, respectively. Subsequently, the low-dimensional digital beamformer is optimized based on the low-dimensional equivalent channels, resulting in reduced computational complexity and channel estimation complexity. Our numerical results show that 1) the proposed approach effectively eliminates the spatial wideband effect, and 2) the proposed sub-connected architecture is more energy efficient and has fewer hardware constraints on the TTD and system bandwidth compared to the fully-connected architecture.
    
\end{abstract}


\begin{IEEEkeywords}
    Beamfocusing, near-field, spatial wideband effect, true-time delayers, wideband communications
\end{IEEEkeywords}

\section{Introduction} \label{sec:intro}
The sixth-generation (6G) wireless network, with its substantially higher performance targets than current networks, is expected to integrate two key emerging trends: extremely large-scale antenna arrays (ELAAs) and the use of significantly higher frequency bands \cite{zhang20196g}. ELAAs can offer substantial array gain and high spatial resolution, thereby enhancing network capacity and connectivity. Simultaneously, higher frequency bands, such as millimeter-wave (mmWave) and terahertz (THz) bands, are anticipated to provide vast bandwidth resources, further advancing network performance. However, the shift towards ELAAs and the adoption of mmWave or THz frequency bands involves more than just an increase in the number of antennas and carrier frequencies. It also leads to notable changes in the electromagnetic properties of the wireless environment. The electromagnetic field region around transmit antennas is typically divided into two regions: the near-field region and the far-field region \cite{kraus2002antennas}. The Rayleigh distance, which is proportional to the antenna array's aperture and the carrier frequency, is a widely used measure to distinguish between these two regions. In previous generations of wireless networks, the Rayleigh distance was generally less than a few meters, rendering the near-field region relatively insignificant. This allowed for wireless network designs based predominantly on the far-field approximation. However, with the emergence of ELAAs and the use of higher frequency bands, the Rayleigh distance has extended considerably, spanning tens or even hundreds of meters \cite{liu2023near}. This necessitates a fundamental rethinking and redesign of wireless network architectures by taking into account the near-field effects.

Beamforming in near-field systems is different from the conventional far-field counterpart. In far-field systems, the signal wavefront is approximated to be planar. Therefore, far-field beamforming can only steer signals along a direction like a flashlight \cite{heath2016overview}. However, near-field systems require a more accurate spherical wavefront model. In this case, near-field beamforming can focus signals around a specific location. Thus, near-field beamforming is also referred to as \emph{beamfocusing} \cite{zhang2022beam}. Compared to far-field beamforming, near-field beamfocusing can enhance the data rate and multiple access capability by reducing inter-user interference. Phased arrays are a common hardware choice for beamforming, typically employing phase shifters (PSs). These arrays offer cost efficiency and high energy efficiency compared to fully-digital arrays. By utilizing a limited number of radio-frequency (RF) chains, phased arrays can achieve performance comparable to fully-digital arrays through hybrid analog-digital beamforming in narrowband systems \cite{yu2016alternating, sohrabi2016hybrid}. However, traditional phased arrays face limitations when applied to ELAAs, particularly in exploiting the wide bandwidth resources available at mmWave and THz bands. This limitation arises from the spatial wideband effect \cite{wang2018spatial}, where the array response varies significantly with frequency but the conventional phased array can only facilitate frequency-independent beamforming. In far-field wideband systems, the mismatch between array response and beamforming ability leads to the misalignment of the beams with respect to the user's direction at different frequencies. The case becomes even worse in near-field wideband systems, wherein the beams can misfocus in both range and angle domains with respect to the user's location. Therefore, the frequency-independent property of conventional phased arrays limits the effective usage of wideband resources and leads to low spectral efficiency. The main focus of this paper is to address this issue in near-field wideband multi-user systems.

\begin{table*}[t!]
    \caption{Our Contributions in Contrast to the State-of-the-Art}
    \label{table_compare}
    \centering
    \begin{tabular}{!{\vrule width1pt}l!{\vrule width1pt}c!{\vrule width1pt}c!{\vrule width1pt}c!{\vrule width1pt}c!{\vrule width1pt}c!{\vrule width1pt}c!{\vrule width1pt}}
        \Xhline{1pt}
        &[6], [10]-[12] & [13] &[14], [16]-[19] & [20] & [21], [22] & \textbf{This work} \\ \Xhline{1pt}
        Near-field effect &\ding{52} &\ding{56} &\ding{56} &\ding{52} &\ding{52} &\ding{52}\\ \Xhline{1pt}
        Spatial wideband effect &\ding{56} &\ding{52} &\ding{52} &\ding{52} &\ding{52} &\ding{52}\\ \Xhline{1pt}
        Multiple communication users &\ding{52} &\ding{56} &\ding{56} &\ding{56} &\ding{56} &\ding{52}\\ \Xhline{1pt}
        The employment of TTDs &\ding{56} &\ding{56} &\ding{52} &\ding{56} &\ding{52} &\ding{52}\\\Xhline{1pt}      
    \end{tabular}
    \vspace{-0.3cm}
\end{table*}

\subsection{Prior Works}

The near-field narrowband beamforming has been widely explored in the literature. For example, \cite{lu2021near} studied the performance of near-field narrowband multi-user communications using the classical maximum-ratio transmission (MRT), zero-forcing (ZF), and minimum mean-squared error (MMSE) beamforming strategies. Their results unveiled the advantages of near-field beamfocusing over conventional far-field beamforming in terms of inter-user interference mitigation. From the optimization perspective, \cite{zhang2022beam} proposed a series of alternating optimization approaches for near-field narrowband beamfocusing using different antenna arrays. Focusing on the holographic metasurface antenna arrays, \cite{10339299} developed multiple multi-user nearfield beamforming designs under different hardware constraints. Furthermore, by analyzing the channel gain and interference gain, \cite{bacci2023spherical} reveals the significant performance loss caused by utilizing the far-field approximation. Their findings emphasize the urgent need for a transition to near-field communication in the high-frequency band.

In parallel, wideband beamforming, which aims to mitigate the spatial wideband effect, has also been well-studied for far-field systems. In \cite{chen2020hybrid}, several matrix projection methods were introduced to alleviate the spatial wideband effect for conventional phased arrays. In a separate study, \cite{gao2021wideband} proposed a virtual sub-array approach to reduce this effect. However, due to the frequency-independent nature of phased arrays, these methods still exhibit significant performance loss compared to the fully-digital array for large bandwidth. To tackle this issue, true-time delayers (TTDs) can be employed in phase arrays to facilitate the frequency-dependent beamforming \cite{6531062}. However, compared to PSs, TTDs are typically more expensive, less energy-efficient, and larger in size. Several TTD-based hybrid beamforming architectures have been proposed to reduce the negative impacts of TTDs. For example, the authors of \cite{gao2021wideband} and \cite{dai2022delay} crafted a hybrid-TTD-PS hybrid beamformer architecture that has massive PSs but only a limited number of TTDs, and explored its performance in a far-field single-user system. \cite{nguyen2022joint} studied the optimization of the PS and TTD coefficients in a far-field single-user system, and derived a closed-form solution for limited-range TTDs based on the far-field approximation. The channel estimation problem was investigated in \cite{dovelos2021channel}, developing a modified orthogonal matching pursuit (OMP) algorithm. To further reduce energy consumption and hardware complexity, \cite{yan2022energy} exploited the low-cost TTDs with fixed time delays and explored the related beamforming design for far-field single-user systems. 

So far, there are only a limited number of studies about the near-field wideband beamforming design \cite{myers2021infocus, cui2021near, zhang2023deep}. Specifically, the authors of \cite{myers2021infocus} developed a beamforming design that utilizes spatial coding to address the near-field beam split effect for the conventional hybrid beamformer architecture. Adopting the TTD-based hybrid beamforming architecture, a heuristic single-user wideband beamforming design was proposed in \cite{cui2021near} based on piecewise-far-field approximation. As a further advance, a deep-learning-based beamforming method was conceived in \cite{zhang2023deep} to maximize the array gain at a single user across all subcarriers. 

\vspace{-0.1cm}
\subsection{Motivations and Contributions}

Against the above background, the TTD-based hybrid beamforming architecture is a promising solution for the wideband beamforming design in either far-field or near-field systems.  However, it is notable that most existing research in this area has predominantly focused on single-user systems \cite{gao2021wideband, dai2022delay, nguyen2022joint,zhang2022fast, dovelos2021channel, yan2022energy, cui2021near, zhang2023deep}. In such contexts, beamforming design is typically oriented towards maximizing array gain along the direction or location of the user. This approach, while effective for single-user scenarios, is flawed in multi-user systems where both array gain and inter-user interference must be carefully balanced. This requirement introduces a complex design challenge, particularly in the context of TTD-based hybrid beamforming. To the best of the authors' knowledge, this is the first attempt to directly address the design problem of TTD-based hybrid beamforming in multiuser systems by considering both unit modulus constraints and finite TTD range constraints in either far-field or near-field scenarios. Furthermore, the conventional TTD-based hybrid beamforming architecture employs a fully-connected structure. In such a structure, the number of TTDs and their required maximum delay are proportional to the aperture of the entire antenna array \cite{dai2022delay, cui2021near}. Therefore, for ELAAs with large apertures, the requirement on TTDs can be stringent and costly. Motivated by this, we proposed a new sub-connected TTD-based architecture that not only alleviates the stringent TTD requirement but also enhances energy efficiency. Finally, considering that the far-field model is essentially an approximation of the near-field model \cite{liu2023near}, our study primarily focuses on beamforming design in near-field systems. Our design can be directly extended to far-field systems. The primary contributions of this paper can be succinctly summarized below, which are boldly and explicitly contrasted to the relevant state-of-the-art in Table \ref{table_compare}.
\begin{itemize}
    \item We first investigated near-field beamforming designs that maximize spectral efficiency in wideband multi-user systems employing TTD-based hybrid beamforming architectures. To improve energy efficiency and reduce hardware requirements, we propose a new sub-connected TTD-based hybrid beamforming architecture.
    \item We propose a penalty-based fully-digital approximation (FDA) method for near-field beamforming designs in both fully-connected and sub-connected architectures. In this approach, the optimization variables are iteratively optimized by closed-form solutions or low-complexity one-dimensional search. This approach also guarantees convergence to the stationary point of the spectral efficiency maximization problem.
    \item We then propose a low-complexity heuristic two-stage (HTS) near-field beamforming design method. In particular, the analog beamformer is first designed to maximize the line-of-sight (LoS) signal power for each user. We propose two low-complexity methods to design the analog beamformers based on the approximated and exact LoS channels, respectively. Subsequently, the digital beamformer is optimized to maximize spectral efficiency through the low-dimensional equivalent channel.
    \item Numerical results demonstrate the efficiency of the proposed method, revealing that 1) the spatial wideband effect can be effectively eliminated by utilizing TTD; 2) the penalty-based FDA method yields the best performance in both fully-connected and sub-connected architectures; 3) the HTS method can achieve comparable performance to the penalty-based FDA method in most cases; and 4) Compared to fully-connected architectures, sub-connected architectures demonstrate superior energy efficiency and enhanced robustness in the face of ultra-large bandwidths and low TTD maximum delays.
\end{itemize}

\subsection{Organization and Notations}
The remainder of this paper is organized as follows. Section \ref{sec:model} presents the signal model for fully-connected and sub-connected architectures and formulates the spectral efficiency maximization problem. Sections \ref{sec:FDA} and \ref{sec:HTS} present the details of the proposed FDA and HTS approaches, respectively. Section \ref{sec:result} demonstrates the numerical results. Finally, this paper is concluded in Section \ref{sec:conclusion}.

\emph{Notations:} Scalars, vectors, and matrices are represented by the lower-case, bold-face lower-case, and bold-face upper-case letters, respectively; The transpose, conjugate transpose, pseudo-inverse, and trace of a matrix are denoted by $\mathbf{X}^T$, $\mathbf{X}^H$, $\mathbf{X}^\dagger$, and $\mathrm{tr}(\mathbf{X})$, respectively. The Euclidean norm of vector $\mathbf{x}$ is denoted as $\|\mathbf{x}\|$, while the Frobenius norm of matrix $\mathbf{X}$ is denoted as $\|\mathbf{X}\|_F$. For matrix $\mathbf{X}$, $\mathbf{X}(i,j)$ refers to its entry in the $i$-th row and $j$-th column. $\mathbf{X}(i:j,:)$ represents a matrix composed of the rows from the $i$-th to the $j$-th, and $\mathbf{X}(:,i)$ represents a vector composed of the $i$-th column. $\mathbf{x}(i)$ indicates the $i$-th entry of vector $\mathbf{x}$. A block diagonal matrix with diagonal blocks $\mathbf{x}_1,\dots,\mathbf{x}_N$ is denoted as $\mathrm{blkdiag}(\mathbf{x}_1,\dots,\mathbf{x}_N)$. $\mathbb{E}[\cdot]$ represents the statistical expectation, while $\Re\{\cdot\}$ refers to the real component of a complex number. $\mathcal{CN}(\mu, \sigma^2)$ denotes the circularly symmetric complex Gaussian random distribution with mean $\mu$ and variance $\sigma^2$. The matrix $\mathbf{I}_N$ denotes the identity matrix of size $N$. The symbol $\angle$ represents the phase of a complex value.
\section{System Model and Problem Formulation} \label{sec:model}

We consider a THz near-field wideband multi-user communication system in which a BS utilizes a uniform linear array (ULA) consisting of $N$ antenna elements with an antenna spacing of $d$. The system involves $K$ communication users, each equipped with a single antenna, and their indices are collected in the set $\mathcal{K}$. To mitigate inter-symbol interference inherent in wideband communications, orthogonal frequency division multiplexing (OFDM) is employed, where the signal is generated in the frequency domain and then transformed into the time domain by inverse discrete Fourier transform (IDFT). Let $M$ and $L_{\mathrm{CP}}$ denote the number of subcarriers and the length of cyclic prefix in OFDM, respectively, $B$ denote the system bandwidth, $f_c$ denote the central carrier frequency, $c$ denote the speed of light, and $\lambda_c = c/f_c$ denote the wavelength of the central carrier. Consequently, the frequency of subcarrier $m$ is determined as $f_m = f_c + B(2m-1-M)/(2M), \forall m \in \mathcal{M} = \{1,...,M\}$. The antenna spacing of the ULA at the BS is set to half of the central wavelength, i.e., $d = \lambda_c/2$.  

\begin{figure}[t!]
    \centering
    \includegraphics[width=0.3\textwidth]{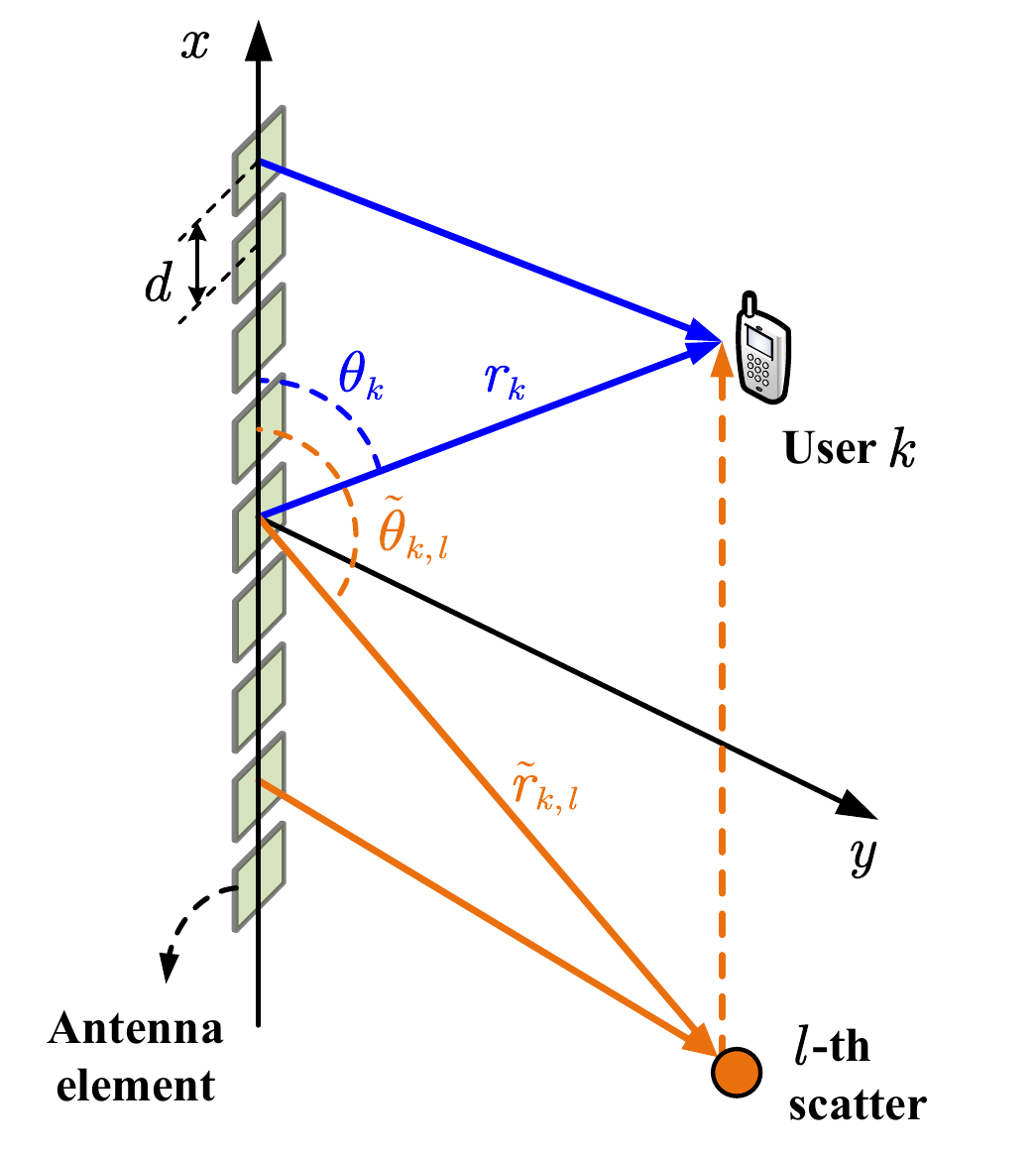}
    \caption{Illustration of the multipath near-field channel model.}
    \label{fig:multi_path}
\end{figure}

\vspace{-0.4cm}
\subsection{Near-Field Wideband Channel Model}

We assume that all users are located in the near-field region of the BS, i.e., the distances between users and the BS are less than the Rayleigh distance $2D^2/\lambda_c$ \cite{liu2023near}, where $D = (N-1)d$ denotes the aperture of the antenna array. Therefore, as shown in Fig. \ref{fig:multi_path}, we adopt the multipath near-field channel model for each communication user, which consists of a LoS channel and several non-line-of-sight (NLoS) channels induced by $L_k$ scatterers associated with user $k$.
Let $r_k$ and $\theta_k$ denote the distance and angle of user $k$ with respect to the central element of the ULA, respectively, and $\tilde{r}_{k,l}$ and $\tilde{\theta}_{k,l}$ denote the counterpart of the $l$-th scatterer for user $k$. Based on the uniform spherical wave model \cite{liu2023near}, the channel between the BS and user $k$ at the $m$-th subcarrier can be modeled as\footnote{Here, we assume that the discrete Fourier transform (DFT) operation and cyclic prefix removal have been conducted at the OFDM receiver. We refer to \cite{wang2018spatial} and \cite{wang2024performance} for the detailed derivation of the frequency-domain channels in OFDM systems.}
\begin{equation}
    \mathbf{h}_{m,k} = \beta_{m,k} \mathbf{b}^*(f_m,\theta_k, r_k) + \sum_{l=1}^{L_k} \tilde{\beta}_{m,k,l} \mathbf{b}^*(f_m,\tilde{\theta}_{k,l}, \tilde{r}_{k,l}).
\end{equation}     
Here, $\beta_{m,k}$ and $\tilde{\beta}_{m,k,l}$ denote the complex channel gain for LoS and NLoS channels, respectively. The channel gain for the LoS channel can be modeled as \cite{jornet2011channel} 
\begin{equation}
    |\beta_{m,k}| = \frac{c}{4 \pi f_m r_k} e^{-\frac{1}{2}k_{\mathrm{abs}}(f_m) r_k} \triangleq \alpha_k(f_m),
\end{equation}
where $k_{\mathrm{abs}}(f_m)$ represents the medium absorption coefficient at frequency $f_m$. For NLoS channels, the reflection coefficients have to be considered. This coefficient for the $l$-th NLoS channel for user $k$ at frequency $f_m$ is given by \cite{dovelos2021channel,piesiewicz2007scattering}  
\begin{equation}
    \Gamma_{k,l}(f_m) = \frac{ \cos \phi_{k,l}^{\mathrm{in}} - n_r(f_m) \cos \phi_{k,l}^{\mathrm{ref}} }{\cos \phi_{k,l}^{\mathrm{in}} + n_r(f_m) \cos \phi_{k,l}^{\mathrm{ref}}} e^{- \frac{1}{2} \big( \frac{4 \pi f_m \sigma_{r} \cos \phi_{k,l}^{\mathrm{in}} }{c} \big)^2},
\end{equation}
where $n_r(f_m) = Z_0/Z(f_m)$ is the frequency-dependent refractive index with $Z_0 = 377$ $\Omega$ and $Z(f_m)$ denoting the wave impedances of the free space and the reflecting material, respectively, $\phi_{k,l}^{\mathrm{in}}$ denotes the angle of incidence, $\phi_{k,l}^{\mathrm{ref}} = \arctan (n_r^{-1} \sin \phi_{k,l}^{\mathrm{in}} )$ denotes the angle of refraction, and $\sigma_r$ represents the roughness of the reflection material. Then, the channel gain for the NLoS channel is given by
\begin{equation}
    |\tilde{\beta}_{m,k,l}| = |\Gamma_{k,l}(f_m)| \alpha_k(f_m).  
\end{equation}        
We note that in THz bands, substantial reflection losses lead to NLoS channels experiencing attenuation greater than $10$ dB on average when compared to LoS channels \cite{priebe2013stochastic}. Consequently, the communication channel is LoS-dominant and NLoS-assisted.
Furthermore, vector $\mathbf{b}(f, \theta, r)$ denotes the array response vector and is given by   
\begin{equation} \label{near_response}
    \mathbf{b}(f, \theta, r) = \left[e^{-j \frac{2 \pi f}{c} (r_1(r, \theta)-r) },\dots,e^{-j \frac{2 \pi f}{c} (r_N(r, \theta)-r)} \right]^T,
\end{equation}
where $r_n(r, \theta)$ denote the propagation distance of the signal emitted from the $n$-th antenna. In the near-field region, it has to be accurately modeled as 
\begin{equation}
    r_n(r, \theta) = \sqrt{ r^2 + \chi_n^2d^2 - 2 r \chi_n d\cos \theta },
\end{equation}  
where $\chi_n = n -1 - \frac{N-1}{2}$. It is worth noting that the near-field array response vector is frequency-dependent due to the spatial wideband effect. This property introduces additional challenges for near-field wideband beamforming design.

\vspace{-0.3cm}
\subsection{Antenna Architectures for Wideband Beamfocusing}
As highlighted in Section \ref{sec:intro}, TTD-based hybrid beamforming architectures present a viable solution to address challenges arising from the frequency-dependent array response. Unlike conventional PSs\footnote{Due to the physical properties of the components used in implementing PSs (like capacitors and inductors), which have frequency-dependent behaviors, the response of PSs naturally changes with frequencies. However, unlike TTDs, the frequency-dependent behavior of PSs is fixed according to the circuit design and cannot be dynamically controlled. Such a property may complicate the system design. Therefore, in wideband systems, the PSs are usually designed to provide an approximately flat response across the entire working frequency band \cite{5648370, 4623722}. Therefore, in this paper, we assume that the response of PSs is perfectly frequency-flat as in the literature \cite{ gao2021wideband, dai2022delay}.}, TTDs can realize a frequency-dependent phase shift of $e^{-j 2\pi f_m t}$ on subcarrier $m$ in the frequency domain by introducing a time delay $t$ into the signals \cite{6531062}. However, due to the high hardware cost and energy consumption of TTDs, TTD-based hybrid beamforming architectures usually install only a limited number of TTDs between the RF chain and the PS network \cite{gao2021wideband,dai2022delay}. In the conventional fully-connected architecture, as illustrated in Fig. \ref{fig:TTD_full}, each TTD is linked to a sub-array through PSs and each RF chain is connected to the entire antenna array. To further improve energy efficiency and reduce TTD hardware requirements, we introduce a new sub-connected setup for TTD-based hybrid beamforming, depicted in Fig. \ref{fig:TTD_sub}, where each RF chain is connected to a sub-array via TTDs and PSs. 

\begin{figure}[!t]
    \centering
    \subfigure[Fully-connected architecture.]{
        \includegraphics[width=0.4\textwidth]{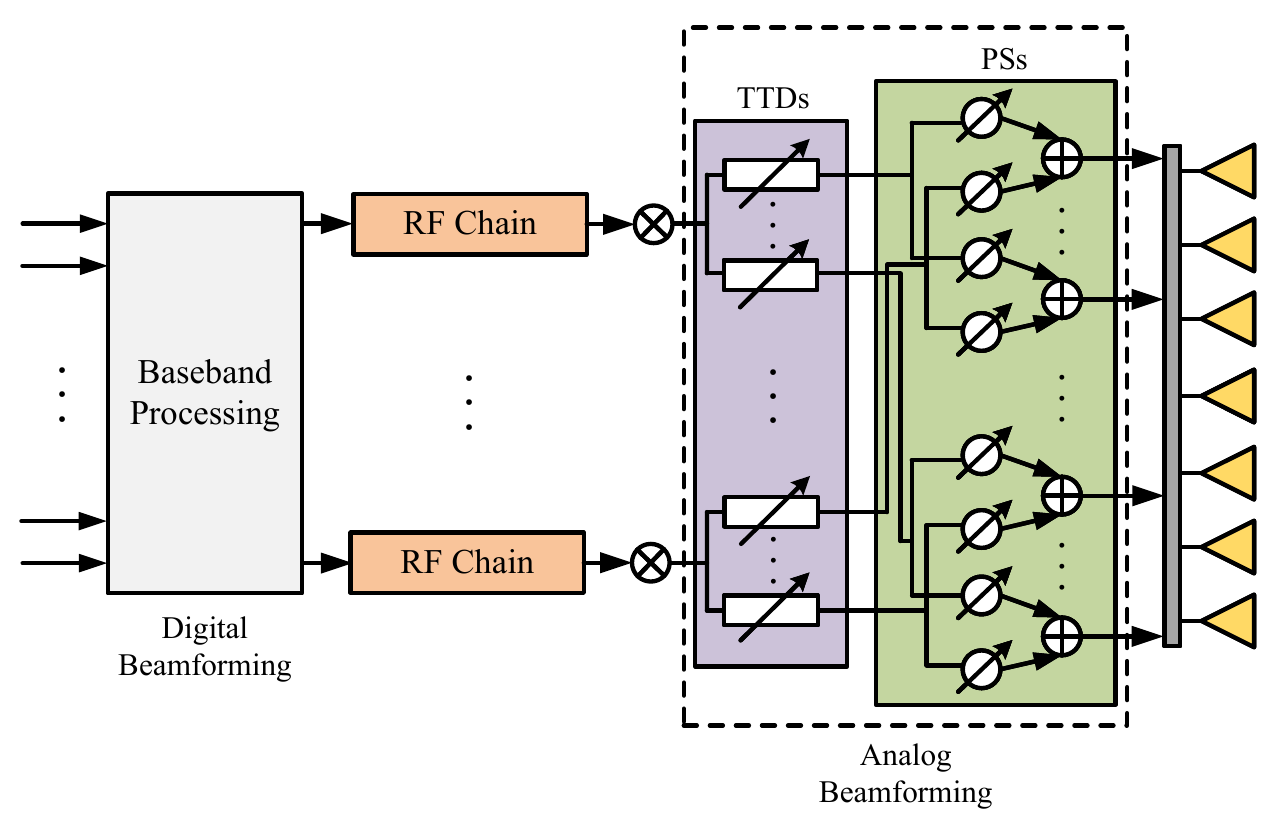}
        \label{fig:TTD_full}
    }
    \subfigure[Sub-connected architecture.]{
        \includegraphics[width=0.4\textwidth]{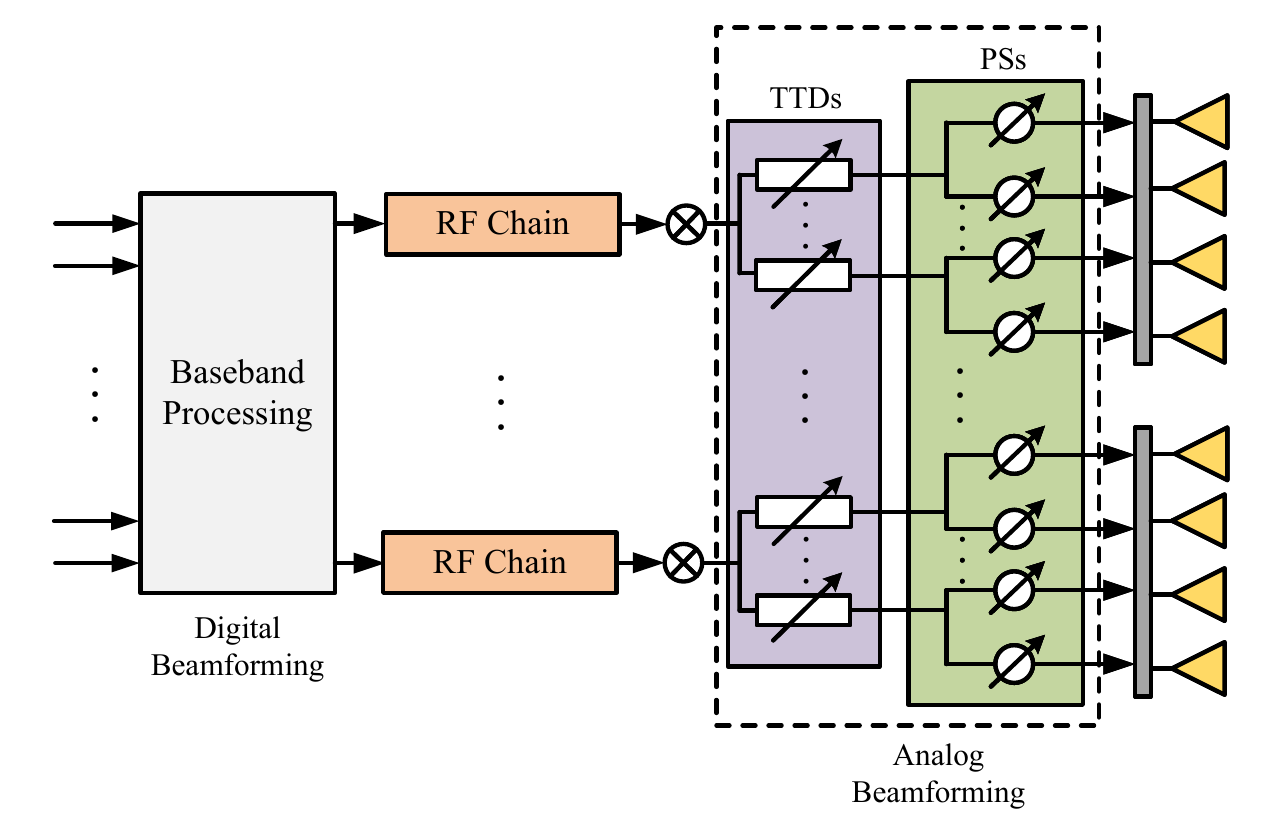}
        \label{fig:TTD_sub}
    }
    \caption{Architectures for TTD-based hybrid beamforming.}
    \label{fig:TTD}
\end{figure}

Let $N_{\mathrm{RF}}$ denote the number of RF chains in the TTD-based hybrid beamforming architecture, which is subject to the constraint $K \le N_{\mathrm{RF}} \ll N$, and $N_{\mathrm{T}}$ denote the number of TTDs connected to each RF chain. We assume that each TTD is connected to a sub-array with the same size. Then, when the TTD-hybrid beamforming architecture is exploited, the received signal of user $k$ on subcarrier $m$ is given by
\begin{equation} \label{eqn:received_signal}
    y_{m,k} = \mathbf{h}_{m,k}^H \mathbf{A} \mathbf{T}_m \mathbf{D}_m \mathbf{s}_m + n_{m,k},
\end{equation}  
where matrices $\mathbf{A} \in \mathbb{C}^{N \times N_{\mathrm{T}}N_{\mathrm{RF}}}$ and $\mathbf{T}_m \in \mathbb{C}^{N_{\mathrm{T}}N_{\mathrm{RF}} \times N_{\mathrm{RF}}}$ denote the frequency-independent and frequency-dependent analog beamformers realized by TTDs and PSs, respectively, matrix $\mathbf{D}_m \in \mathbb{C}^{N_{\mathrm{RF}} \times K}$ denotes the baseband digital beamformer, vector $\mathbf{s}_m \in \mathbb{C}^{K \times 1}$ denotes the information symbols of the $K$ users on subcarrier $m$, satisfying $\mathbb{E}[ \mathbf{s}_m \mathbf{s}_m^H ] = \mathbf{I}_K$, and $n_{m,k} \sim \mathcal{CN}(0, \sigma_{m,k}^2)$ represents the additive Gaussian white noise. For both fully-connected and sub-connected architectures, the TTD-based analog beamformer is given by 
\begin{equation} \label{TTD_analog}
    \mathbf{T}_m = \mathrm{blkdiag} \left( e^{-j 2\pi f_m  \mathbf{t}_1},\dots,e^{-j 2\pi f_m  \mathbf{t}_{N_{\mathrm{RF}}}} \right),
\end{equation}   
where $\mathbf{t}_n = [t_{n,1},\dots,t_{n, N_{\mathrm{T}}}]^T \in \mathbb{R}^{N_{\mathrm{T}} \times 1}$ denotes the time-delay vector realized the TTDs connected to the $n$-th RF chain. The time delay of each TTD needs to satisfy the maximum delay constraint, i.e., $t_{n,l} \in [0, t_{\max}], \forall n = 1,...,N_{\mathrm{RF}}, l = 1,\dots,N_{\mathrm{T}}$. Typically, achieving a larger $t_{\max}$ requires a more complex structure and a larger footprint size of TTDs \cite{yan2022energy, 8248806}.
The PS-based analog beamformer $\mathbf{A}$ has different structures for the fully-connected and sub-connected architectures, which are detailed in the following.
\begin{itemize}
    \item \textbf{Fully-connected Architecture:} In this architecture, each TTD is connected to $N/N_{\mathrm{T}}$ antennas via PSs. Therefore, the PS-based analog beamformer $\mathbf{A}$ for the fully-connected architecture is given by
    \begin{align}
        \label{ps_full}
        &\mathbf{A} = [\mathbf{A}_1 ,\dots,\mathbf{A}_{N_{\mathrm{RF}}} ], \\
        \label{ps_full_2}
        &\mathbf{A}_n  = \mathrm{blkdiag} ( \mathbf{a}_{n,1} ,...,\mathbf{a}_{n,N_{\mathrm{T}}}  ),
    \end{align} 
    where $\mathbf{A}_{n}  \in \mathbb{C}^{N \times N_{\mathrm{T}}}$ and $\mathbf{a}_{n,l} \in \mathbb{C}^{\frac{N}{N_{\mathrm{T}}} \times 1}$ denote the PS-based analog beamformer connected to the $n$-th RF chain and the $l$-th TTD of this chain, respectively.
    
    \item \textbf{Sub-connected Architecture:} In this architecture, each RF chain is only connected to a sub-array of $N_{\mathrm{sub}} = N/N_{\mathrm{RF}}$ antennas, and thus each TTD is connected to $N_{\mathrm{sub}}/N_{\mathrm{T}}$ antennas. As a result, the corresponding PS-based analog beamformer $\mathbf{A} $ can be expressed follows:
    \begin{align}
        \label{ps_sub}
        &\mathbf{A} = \mathrm{blkdiag}(\mathbf{A}_{1},\dots,\mathbf{A}_{N_{\mathrm{RF}}}), \\
        \label{ps_sub_2}
        &\mathbf{A}_{n} = \mathrm{blkdiag}( \mathbf{a}_{n,1},\dots,\mathbf{a}_{n, N_{\mathrm{T}}} ),
    \end{align}
    where $\mathbf{A}_{n} \in \mathbb{C}^{N_{\mathrm{sub}} \times N_{\mathrm{T}}}$ and $\mathbf{a}_{n,l} \in \mathbb{C}^{\frac{N_{\mathrm{sub}}}{N_{\mathrm{T}}}\times 1}$ denote the PS-based analog beamformer for the sub-array connected to the $n$-th RF chain and its $l$-th TTD, respectively.
\end{itemize}
Moreover, due to the inherent limitations of PSs, which can only adjust the phase of the signals, the analog beamformer based on PSs must satisfy the unit-modulus constraint. In other words, for non-zero entries, the PS-based analog beamformer should satisfy $|\mathbf{A}(i,j)| = 1$.

\begin{remark} \label{remark_1}
    \emph{
    Compared to fully-connected architectures, sub-connected architectures have the following advantages. \emph{First}, it significantly reduces the required number of PSs from $N N_{\mathrm{RF}}$ to $N$. \emph{Second}, in the sub-connected architecture, the TTDs connected to each RF chain only need to mitigate the spatial wideband effect of individual sub-arrays, which is less significant compared to the effect related to the entire array. Therefore, the requirements on the number of TTDs and the maximum delay are more relaxed.}
\end{remark}

\vspace{-0.4cm}
\subsection{Problem Formulation}
We aim to maximize the spectral efficiency of the considered near-field wideband communication system by jointly optimizing the analog and digital beamformers of the TTD-based hybrid beamforming architecture. According to \eqref{eqn:received_signal}, the signal-to-interference-plus-noise ratio (SINR) for decoding the desired signal of user $k$ on subcarrier $m$ is given by 
\begin{equation} \label{achievable_rate}
    \gamma_{m,k} = \frac{| \mathbf{h}_{m,k}^H \mathbf{A} \mathbf{T}_m \mathbf{d}_{m,k} |^2}{\sum_{i =1, i \neq k}^K | \mathbf{h}_{m,k}^H \mathbf{A} \mathbf{T}_m \mathbf{d}_{m,i} |^2 + \sigma_{m,k}^2},
\end{equation}    
where $\mathbf{d}_{m,k} \in \mathbb{C}^{N_{\mathrm{RF}} \times 1}$ is the vector at the $k$-th column of $\mathbf{D}_m$ and also the digital beamformer of user $k$ on subcarrier $m$. Therefore, by taking into account the OFDM cyclic prefix, the spectral efficiency of the considered system is given by
\begin{equation}
    R = \frac{1}{M + L_{\mathrm{CP}}} \sum_{m=1}^M \sum_{k=1}^K \log_2 (1 + \gamma_{m,k}).
\end{equation}
Therefore, the spectral efficiency maximization problem can be formulated as follows:
\begin{subequations} \label{problem_se_max}
    \begin{align}
        \max_{\mathbf{A}, \mathbf{T}_m, \mathbf{D}_m}  \quad & \sum_{m=1}^M \sum_{k=1}^K \log_2 (1 + \gamma_{m,k}) \\
        \label{constraint:1}
        \mathrm{s.t.} \quad & \|\mathbf{A} \mathbf{T}_m \mathbf{D}_m \|_F^2 \le P_t, \forall m,  \\
        \label{constraint:2}
        & |\mathbf{A}(i,j)| = 1, \forall (i,j) \in \mathcal{A}, \\
        \label{constraint:3}
        & \mathbf{T}_m = \mathrm{blkdiag} \big( e^{-j 2\pi f_m  \mathbf{t}_1}, \nonumber \\
        &\hspace{1.6cm} \dots,e^{-j 2\pi f_m  \mathbf{t}_{N_{\mathrm{RF}}}} \big), \forall m,\\
        \label{constraint:4}
        & t_{n,l} \in [0, t_{\max}], \forall n, l,
    \end{align}
\end{subequations}
where $P_t$ denotes the maximum transmit power available for each subcarrier \cite{sohrabi2016hybrid} and set $\mathcal{A}$ collects indexes of non-zero entry in matrix $\mathbf{A}$. Problem \eqref{problem_se_max} is generally challenging to solve due to the high coupling between different optimization variables and different subcarriers. To the best of the authors' knowledge, there is no generic solution to the problem \eqref{problem_se_max} in the presence of the above challenges. Therefore, in the following, we first propose an FDA approach to this optimization problem, which has provable optimality but incurs higher computational complexity and higher channel estimation overhead. This approach provides theoretical performance bounds for TTD-based beamforming architectures. Then, an HTS approach is proposed to reduce the complexity.

\section{Fully-digital Approximation Approach} \label{sec:FDA}

In this section, we develop an FDA approach, which directly maximizes the spectral efficiency, to solve problem \eqref{problem_se_max}. The main idea of the FDA approach is to optimize the hybrid beamformer sufficiently “close” to the optimal unconstrained fully-digital beamformer through the penalty method \cite{shi2020penalty}. 

\subsection{Problem Reformulation}
In this subsection, we reformulate problem \eqref{problem_se_max} into an FDA problem. To streamline the optimization process, we first remove the transmit power constraint in problem \eqref{problem_se_max}. Based on the fact that the optimal multi-user beamformers tend to use all available transmit power to maximize the special efficiency \cite{6832894}, we have the following lemma.

\begin{lemma} \label{lemma_1}
    \emph{
        Let $\tilde{\mathbf{A}}$, $\tilde{\mathbf{T}}_m$, and $\tilde{\mathbf{D}}_m$ denote an optimal solution to the following optimization problem:  
        \begin{subequations} \label{problem_se_max_new}
            \begin{align}
                \max_{\mathbf{A}, \mathbf{T}_m, \mathbf{D}_m}  \quad & \sum_{m=1}^M \sum_{k=1}^K \log_2 (1 + \tilde{\gamma}_{m,k}) \\
                \mathrm{s.t.} \quad & \eqref{constraint:2}-\eqref{constraint:4},
            \end{align}
        \end{subequations} 
        where $\tilde{\gamma}_{m,k}$ is given by 
        \begin{equation}
            \tilde{\gamma}_{m,k} = \frac{| \mathbf{h}_{m,k}^H \mathbf{A} \mathbf{T}_m \mathbf{d}_{m,k} |^2}{\sum_{i =1, i \neq k}^K | \mathbf{h}_{m,k}^H \mathbf{A} \mathbf{T}_m \mathbf{d}_{m,i} |^2 + \frac{\sigma_{m,k}^2}{P_t} \|\mathbf{A} \mathbf{T}_m \mathbf{D}_m \|_F^2}.
        \end{equation} 
        Then, an optimal solution to problem \eqref{problem_se_max} can be obtained by 
        \begin{equation} \label{solution_scale}
            \mathbf{A} = \tilde{\mathbf{A}}, \mathbf{T}_m = \tilde{\mathbf{T}}_m, \mathbf{D}_m = \frac{\sqrt{P_t}}{\|\tilde{\mathbf{A}} \tilde{\mathbf{T}}_m \tilde{\mathbf{D}}_m \|_F} \tilde{\mathbf{D}}_m.
        \end{equation}}
\end{lemma}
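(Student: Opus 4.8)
The plan is to exploit the scale-invariance structure of the two objective functions. First I would observe that, for a fixed analog beamformer pair $(\mathbf{A},\mathbf{T}_m)$, replacing $\mathbf{D}_m$ by $c_m\mathbf{D}_m$ with $c_m>0$ multiplies both the numerator $|\mathbf{h}_{m,k}^H\mathbf{A}\mathbf{T}_m\mathbf{d}_{m,k}|^2$ and the interference sum in $\gamma_{m,k}$ by $c_m^2$, while leaving the noise power $\sigma_{m,k}^2$ untouched; hence each $\gamma_{m,k}$ is a nondecreasing function of $c_m$, strictly increasing whenever the desired-signal term is nonzero (recall $\sigma_{m,k}^2>0$). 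Since the power budget \eqref{constraint:1} is imposed separately on each subcarrier, this immediately implies that any optimal solution of \eqref{problem_se_max} may be assumed, without loss of optimality, to satisfy $\|\mathbf{A}\mathbf{T}_m\mathbf{D}_m\|_F^2=P_t$ for every $m$: whenever some subcarrier is strictly underloaded, rescaling $\mathbf{D}_m$ by $c_m=\sqrt{P_t}\,/\,\|\mathbf{A}\mathbf{T}_m\mathbf{D}_m\|_F$ preserves feasibility and does not decrease the objective. (This is the ``full-power'' property alluded to via \cite{6832894}, here obtained directly.)

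Next I would relate the two SINR expressions on this boundary set. Substituting $\|\mathbf{A}\mathbf{T}_m\mathbf{D}_m\|_F^2=P_t$ into $\tilde\gamma_{m,k}$ reduces it to $\gamma_{m,k}$, so the objectives of \eqref{problem_se_max} and \eqref{problem_se_max_new} agree on the set $\{\|\mathbf{A}\mathbf{T}_m\mathbf{D}_m\|_F^2=P_t\ \forall m\}$. Moreover $\tilde\gamma_{m,k}$ is \emph{invariant} under the rescaling $\mathbf{D}_m\mapsto c_m\mathbf{D}_m$, because the surrogate noise term $\tfrac{\sigma_{m,k}^2}{P_t}\|\mathbf{A}\mathbf{T}_m\mathbf{D}_m\|_F^2$ also scales by $c_m^2$; and \eqref{problem_se_max_new} contains no constraint linking $\mathbf{D}_m$ to $\mathbf{A}$ or $\mathbf{T}_m$. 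Consequently, given any optimal $\tilde{\mathbf{A}},\tilde{\mathbf{T}}_m,\tilde{\mathbf{D}}_m$ of \eqref{problem_se_max_new}, the rescaled triple in \eqref{solution_scale} has the same objective value in \eqref{problem_se_max_new}, now satisfies $\|\tilde{\mathbf{A}}\tilde{\mathbf{T}}_m\mathbf{D}_m\|_F^2=P_t$ for all $m$, and therefore is feasible for \eqref{problem_se_max}.

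The argument then closes by comparing optimal values. By the first step, the optimum of \eqref{problem_se_max} is attained on $\{\|\mathbf{A}\mathbf{T}_m\mathbf{D}_m\|_F^2=P_t\ \forall m\}$; on that set $\gamma_{m,k}=\tilde\gamma_{m,k}$ and the constraints of \eqref{problem_se_max} reduce to those of \eqref{problem_se_max_new} together with the boundary equality, so the optimal value of \eqref{problem_se_max} is at most that of \eqref{problem_se_max_new}. Conversely, the point \eqref{solution_scale} built in the second step is feasible for \eqref{problem_se_max} and attains the optimal value of \eqref{problem_se_max_new}, which yields the reverse inequality and exhibits \eqref{solution_scale} as an optimizer of \eqref{problem_se_max}.

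I do not expect a deep obstacle here; the only points needing care are the degenerate cases and the per-subcarrier bookkeeping. Specifically, one should rule out $\mathbf{A}\mathbf{T}_m\mathbf{D}_m=\mathbf{0}$ on any subcarrier at an optimum — such a subcarrier contributes $0$ to the sum-rate and is strictly improvable, hence never part of an optimal solution — and one must choose the scalars $c_m$ independently across $m$ so that all $M$ power constraints are simultaneously tight. Existence of the optimizers is not at issue: it is presupposed for \eqref{problem_se_max_new} in the statement, and for \eqref{problem_se_max} the feasible set is compact.
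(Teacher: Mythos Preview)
Your proposal is correct and follows essentially the same approach as the paper: exploit the scale invariance of $\tilde{\gamma}_{m,k}$ in $\mathbf{D}_m$, the full-power optimality of $\gamma_{m,k}$, and the coincidence of the two objectives on the boundary $\|\mathbf{A}\mathbf{T}_m\mathbf{D}_m\|_F^2=P_t$. The paper's own proof is only a two-sentence sketch of these same observations; your version is more careful (per-subcarrier bookkeeping, the two-way inequality on optimal values, and the degenerate case $\mathbf{A}\mathbf{T}_m\mathbf{D}_m=\mathbf{0}$), but there is no substantive methodological difference.
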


\begin{proof}
    It is easy to verify that the solution in \eqref{solution_scale} satisfies all constraints in problem \eqref{problem_se_max} and achieves the same objective value of problem \eqref{problem_se_max_new} as the original solution $\tilde{\mathbf{A}}$, $\tilde{\mathbf{T}}_m$, and $\tilde{\mathbf{D}}_m$. Therefore, the solution in \eqref{solution_scale} guarantees the maximization of the objective function of problem \eqref{problem_se_max} under the condition that the power constraint is satisfied with equality.
\end{proof}

Although we can use \textbf{Lemma \ref{lemma_1}} to remove the transmit power constraint, the other constraints on matrices $\mathbf{A}$ and $\mathbf{T}_m$ are still challenging to tackle since these matrices are still highly coupled in the sum-of-logarithms objective function. As a remedy, we introduce unconstrained fully-digital beamformers $\mathbf{W}_m = [\mathbf{w}_{m,1},\dots,\mathbf{w}_{m,K}] \in \mathbb{C}^{N \times K}, \forall m,$ as auxiliary variables. Ideally, the equality $\mathbf{W}_m = \mathbf{A} \mathbf{T}_m \mathbf{D}_m$ is expected, leading to the following equivalent form of problem \eqref{problem_se_max_new}:
\begin{subequations} \label{problem_se_max_new_2}
    \begin{align}
        \max_{\mathbf{A}, \mathbf{T}_m, \mathbf{D}_m, \mathbf{W}_m}  & \sum_{m=1}^M \sum_{k=1}^K \log_2 (1 + \hat{\gamma}_{m,k}(\mathbf{W}_m)) \\
        \mathrm{s.t.} \quad & \mathbf{W}_m = \mathbf{A} \mathbf{T}_m \mathbf{D}_m, \forall m, \\
        & \eqref{constraint:2}-\eqref{constraint:4},
    \end{align}
\end{subequations} 
where $\hat{\gamma}_{m,k}(\mathbf{W}_m)$ is the modified SINR realized by the fully-digital beamformer and is given by
\begin{equation}
    \hat{\gamma}_{m,k}(\mathbf{W}_m) = \frac{| \mathbf{h}_{m,k}^H \mathbf{w}_{m,k} |^2}{\sum_{i =1, i \neq k}^K | \mathbf{h}_{m,k}^H \mathbf{w}_{m,i} |^2 + \frac{\sigma_{m,k}^2}{P_t} \|\mathbf{W}_m \|_F^2}.
\end{equation} 
However, the new equality constraint in problem \eqref{problem_se_max_new_2} still hinders the optimization of the hybrid beamformers. To address this issue, the penalty method can be employed, leading to the following optimization problem 
\begin{subequations} \label{problem_se_max_new_3}
    \begin{align}
        \max_{\mathbf{A}, \mathbf{T}_m, \mathbf{D}_m, \mathbf{W}_m}  & \sum_{m=1}^M \sum_{k=1}^K \log_2 (1 + \hat{\gamma}_{m,k}(\mathbf{W}_m)) \nonumber \\
        &- \frac{1}{\rho} \sum_{m=1}^M \|\mathbf{W}_m - \mathbf{A} \mathbf{T}_m \mathbf{D}_m\|_F^2, \\
        \mathrm{s.t.} \quad & \eqref{constraint:2}-\eqref{constraint:4},
    \end{align}
\end{subequations} 
where $\rho > 0$ is the penalty factor. In this problem, the hybrid beamformer can be optimized to gradually approximate the fully-digital beamformer by gradually reducing the value of $\rho$. In particular, when $\rho \rightarrow 0$, the penalty term can be exactly zero, which yields a solution to problem \eqref{problem_se_max_new_2}. However, to ensure optimal performance, the penalty factor is typically initialized with a large value, ensuring that the original objective function is sufficiently maximized. Subsequently, the penalty factor is gradually decreased to guarantee the penalty term approaches zero, i.e., the equality constraint is achieved. Based on the results in \cite{shi2020penalty}, it can be proved that the penalty method guarantees the convergence to a stationary point of problem \eqref{problem_se_max_new_2}, which is also a stationary point of problem \eqref{problem_se_max_new}. In the following, we will provide solutions to problem \eqref{problem_se_max_new_3} for the fully-connected and sub-connected architectures, respectively.

\subsection{Solution for Fully-connected Architectures} \label{sec: FDA_full}
In problem \eqref{problem_se_max_new_3}, all constraints are related to the optimization variables $\mathbf{A}$ and $\mathbf{T}_m, \forall m$. This observation motivates us to apply the block coordinate descent (BCD) method, which alternately optimizes each block with the other blocks fixed in each iteration, to solve this problem. In the following, we elaborate on the optimization of each block.

\subsubsection{Subproblem With Respect to $\mathbf{A}$ and $\mathbf{T}_m$} 

The variables $\mathbf{A}$ and $\mathbf{T}_m$ only appear in the last term of the objective function, leading to the following optimization problem:
\begin{subequations}
    \begin{align}
        \label{obj_1}
        \min_{\mathbf{A}, \mathbf{T}_m}  \quad & \sum_{m=1}^M \left\|\mathbf{W}_m-  \mathbf{A} \mathbf{T}_m \mathbf{D}_m \right\|_F^2 \\
        \mathrm{s.t.} \quad & \eqref{constraint:2}-\eqref{constraint:4}.
    \end{align}
\end{subequations}
For fully-connected architectures, the analog beamformer matrix $\mathbf{A} \mathbf{T}_m$ is a full matrix, resulting in the coupling between the time delay of different TTDs. To address this problem, we introduce additional auxiliary variables $\mathbf{V}_m, \forall m,$ that satisfies $\mathbf{V}_m = \mathbf{A} \mathbf{T}_m$. In the following, we will show that the introduction of $\mathbf{V}_m$ can facilitate the closed-form update of the analog beamformer coefficients. Similarly, this equality constraint can be moved to the objective function as a penalty term, resulting in the following optimization problem:
\begin{subequations} \label{problem_A_T_V}
    \begin{align}
        \label{obj_1_penalty}
        \min_{\mathbf{A}, \mathbf{T}_m, \mathbf{V}_m}  & \sum_{m=1}^M \left( \left\|\mathbf{W}_m-  \mathbf{V}_m \mathbf{D}_m \right\|_F^2 + \frac{1}{\bar{\rho}}  \|\mathbf{V}_m - \mathbf{A} \mathbf{T}_m\|_F^2 \right),  \\
        \mathrm{s.t.} \quad & \eqref{constraint:2}-\eqref{constraint:4},
    \end{align}
\end{subequations}
where $\bar{\rho} > 0$ is the penalty factor for the equality constraint $\mathbf{V}_m = \mathbf{A} \mathbf{T}_m$. Similarly, when $\bar{\rho} \rightarrow 0$, this equality constraint is guaranteed. To decouple the optimization variables, we apply BCD to solve this new optimization problem to alternately update each block as presented in the following.

By fixing $\mathbf{T}_m$ and $\mathbf{V}_m$, problem \eqref{problem_A_T_V} reduces to 
\begin{subequations}
    \begin{align}
        \label{obj_A_1}
        \min_{\mathbf{A}}  \quad & \sum_{m=1}^M \left\|\mathbf{V}_m-  \mathbf{A} \mathbf{T}_m \right\|_F^2 \\
        \mathrm{s.t.} \quad & |\mathbf{A}(i,j)| = 1, \forall (i,j) \in \mathcal{A}.
    \end{align}
\end{subequations}  
Based on \eqref{ps_full} and \eqref{ps_full_2}, the objective function can be equivalently written as 
\begin{align} \label{simple_obj}
    \eqref{obj_A_1} = &\sum_{n = 1}^{N_\mathrm{RF}} \sum_{l=1}^{N_{\mathrm{T}}} \sum_{m=1}^M \left\|\mathbf{v}_{m,n,l} -  \mathbf{a}_{n,l} e^{-j 2\pi f_m t_{n,l}} \right\|^2 \nonumber \\
    = &\sum_{n = 1}^{N_\mathrm{RF}} \sum_{l=1}^{N_{\mathrm{T}}} \sum_{m=1}^M \Big(\mathbf{a}_{n,l}^H \mathbf{a}_{n,l} + \mathbf{v}_{m,n,l}^H \mathbf{v}_{m,n,l} \nonumber \\
    & \hspace{1.3cm}- 2 \Re \left\{ \mathbf{v}_{m,n,l}^H \mathbf{a}_{n,l} e^{-j 2 \pi f_m t_{n,l}} \right\}  \Big),
\end{align}
where $\mathbf{v}_{m,n,l} = \mathbf{V}_m \big((l-1)\frac{N}{N_{\mathrm{T}}}+1:l\frac{N}{N_{\mathrm{T}}}, n \big)$. There are two key observations for the above new objective function. Firstly, according to the unit-modulus constraint on $\mathbf{a}_{n,l}$, we have $\mathbf{a}_{n,l}^H \mathbf{a}_{n,l} = MN/N_{\mathrm{T}}$, which is a constant. Therefore, the optimization variable is only related to the last term. Secondly, vectors $\mathbf{a}_{n,l}, \forall n, l,$ are no longer coupled with each other and can be optimized independently. Based on these observations, for each vector $\mathbf{a}_{n,l}$, its optimal solution can be obtained by solving the following optimization problem:
\begin{subequations}
    \begin{align}
        \max_{\mathbf{a}_{n,l}} \quad & \sum_{m=1}^M \Re \left\{ \mathbf{v}_{m,n,l}^H \mathbf{a}_{n,l} e^{-j 2 \pi f_m t_{n,l}} \right\} \\
        \mathrm{s.t.} \quad &  | \mathbf{a}_{n,l}(i) | = 1, \forall i.
    \end{align}
\end{subequations}
It can be readily obtained that the optimal solution is
\begin{equation} \label{opt_solution_PS}
    \mathbf{a}_{n,l} = e^{j \angle \left(\sum_{m=1}^M \mathbf{v}_{m,n,l} e^{j 2 \pi f_m t_{n,l}}\right)}.
\end{equation}

Next, we optimize the TTD-based analog beamformer $\mathbf{T}_m, \forall m,$ by fixing the other optimization variables. The resulting optimization is given by
\begin{subequations} \label{problem_TTD}
    \begin{align}
        \min_{\mathbf{T}_m }  \quad & \sum_{m=1}^M \|\mathbf{V}_m -  \mathbf{A} \mathbf{T}_m \|_F^2 \\
        \mathrm{s.t.} \quad & \eqref{constraint:3}, \eqref{constraint:4}.
    \end{align}
\end{subequations}
According to the result in \eqref{simple_obj}, the above optimization problem is also a separable problem with respect to each time delay $t_{n,l}, \forall n, l$, leading to the following independent optimization problem for each $t_{n,l}$: 
\begin{subequations} \label{problem_TTD_3}
    \begin{align}
        \max_{t_{n,l} }  \quad & \sum_{m=1}^M \Re \left\{ \mathbf{v}_{m,n,l}^H \mathbf{a}_{n,l} e^{-j 2 \pi f_m t_{n,l}} \right\}\\
        \mathrm{s.t.} \quad & t_{n,l} \in [0, t_{\max}].
    \end{align}
\end{subequations}
This problem is a classical single-variable optimization problem within a fixed interval, which can be effectively solved using one-dimensional search. Specifically, by defining a searching set $\mathcal{Q} = \{0, t_{\max}/(Q-1), 2t_{\max}/(Q-1),\dots,t_{\max} \}$ with $Q$ samples, a nearly-optimal $t_{n,l}$ can be obtained as 
\begin{equation} \label{opt_solution_TTD}
    t_{n,l} = \arg \max_{t_{n,l} \in \mathcal{Q}} \Re \left\{ \mathbf{v}_{m,n,l}^H \mathbf{a}_{n,l} e^{-j 2 \pi f_m t_{n,l}} \right\}.
\end{equation}   

Furthermore, given $\mathbf{A}$ and $\mathbf{T}_m$, the optimal solution of $\mathbf{V}_m$ can be obtained by minimizing objective function of problem \eqref{problem_A_T_V} without constraint, which is given by 
\begin{align} \label{opt_solution_V}
    \mathbf{V}_m = &\arg \min_{\mathbf{V}_m} \| \mathbf{W}_m - \mathbf{V}_m \mathbf{D}_m \|_F^2 + \frac{1}{\bar{\rho}} \|\mathbf{V}_m - \mathbf{A} \mathbf{T}_m \|_F^2 \nonumber \\
    = & \left(\mathbf{W}_m \mathbf{D}_m^H + \frac{1}{\bar{\rho}} \mathbf{A} \mathbf{T}_m \right) \left( \mathbf{D}_m \mathbf{D}_m^H + \frac{1}{\bar{\rho}} \mathbf{I}_{N_\mathrm{RF}} \right)^{-1}.
\end{align} 

Given the above solution to each block, the proposed penalty-based method for solving \eqref{problem_A_T_V} is summarized in \textbf{Algorithm \ref{alg:penalty}}. The convergence of this algorithm is guaranteed since each block is updated by the optimal solution in each iteration and the objective function is bounded from below. 

\begin{algorithm}[tb]
    \caption{Penalty-based algorithm for solving \eqref{problem_A_T_V}.}
    \label{alg:penalty}
    \begin{algorithmic}[1]
        \STATE{initialize $\mathbf{A}$, $\mathbf{T}_m$, $\mathbf{V}_m$, $0<c<1$, and $\bar{\rho} > 0$}
        \REPEAT
        \REPEAT
        \STATE{update $\mathbf{A}$ by \eqref{opt_solution_PS}}
        \STATE{update $\mathbf{T}_m, \forall m,$ by \eqref{opt_solution_TTD}}
        \STATE{update $\mathbf{V}_m, \forall m,$ by \eqref{opt_solution_V}}
        \UNTIL{the fractional decrease of the objective value of problem \eqref{problem_A_T_V} falls below a threshold}
        \STATE{update penalty factor as $\bar{\rho} = c \bar{\rho}$ }
        \UNTIL{the penalty value falls below a threshold}
    \end{algorithmic}
\end{algorithm}

\subsubsection{Subproblem With Respect to $\mathbf{D}_m$}
The optimal solution of $\mathbf{D}_m$ can be obtained by minimizing the second term of the objective function in problem \eqref{problem_se_max_new_3} without constraint, which is given by 
\begin{align} \label{opt_solution_D}
    \mathbf{D}_m = &\arg \min_{\mathbf{D}_m} \|\mathbf{W}_m - \mathbf{A}_m \mathbf{T}_m \mathbf{D}_m\|_F^2 \nonumber \\
    = &(\mathbf{T}_m^H \mathbf{A}^H \mathbf{A} \mathbf{T}_m)^{-1} \mathbf{T}_m^H \mathbf{A}^H \mathbf{W}_m.
\end{align}

\subsubsection{Subproblem With Respect to $\mathbf{W}_m$} \label{sec:W_opt}
When other blocks are fixed, the optimization problem for $\mathbf{W}_m$ is given by 
\begin{align}  \label{problem_P}
    \max_{\mathbf{W}_m} \quad & \sum_{m=1}^M \sum_{k=1}^K \log_2 (1 + \hat{\gamma}_{m,k}(\mathbf{W}_m)) \nonumber \\
    &- \frac{1}{\rho} \sum_{m=1}^M \|\mathbf{W}_m - \mathbf{A} \mathbf{T}_m \mathbf{D}_m\|_F^2.
\end{align}
While this problem also has no constraint, its non-convex objective function makes it still challenging to solve. To address this issue, we adopt the quadratic transform \cite[Theorem 2]{shen2018fractional} and Lagrangian dual transform \cite[Theorem 3]{shen2018fractional2} to convert problem \eqref{problem_P} into the following equivalent form:
\begin{align}  \label{problem_P_2}
    &\max_{\mathbf{W}_m, \boldsymbol{\lambda}_m, \boldsymbol{\mu}_m} \sum_{m=1}^M \sum_{k=1}^K \Bigg(2 \sqrt{1 + \mu_{m,k}} \Re  \left\{ \lambda_{m,k}^* \mathbf{h}_{m,k}^H \mathbf{w}_{m,k} \right\} \nonumber \\
    & - |\lambda_{m,k}|^2 \left( \sum_{i=1}^K | \mathbf{h}_{m,k}^H \mathbf{w}_{m,i} |^2 + \frac{\sigma_{m,k}^2}{P_t} \|\mathbf{W}_m \|_F^2 \right)  \Bigg) \nonumber \\
    & - \frac{1}{\rho} \sum_{m=1}^M \|\mathbf{W}_m - \mathbf{A} \mathbf{T}_m \mathbf{D}_m\|_F^2 \triangleq g(\mathbf{W}_m, \boldsymbol{\lambda}_m, \boldsymbol{\mu}_m),
\end{align}
where $\boldsymbol{\lambda}_m = [\lambda_{m,1}, \lambda_{m,2}, \dots, \lambda_{m,K}]^T$ and $\boldsymbol{\mu}_m = [\mu_{m,1}, \mu_{m,2}, \dots, \mu_{m,K}]^T$ are auxiliary variables. Following \cite{shen2018fractional} and \cite{shen2018fractional2}, for any given $\mathbf{W}_m$, the values of $\mu_{m,k}$ and $\lambda_{m,k}$ can be sequentially updated by the following closed-form solution:
\begin{align}
    \label{opt_solution_mu}
    \mu_{m,k} = &\hat{\gamma}_{m,k}(\mathbf{W}_m), \forall m,k, \\
    \label{opt_solution_lambda}
    \lambda_{m,k} = &\frac{\sqrt{1 + \mu_{m,k}} \mathbf{h}_{m,k}^H \mathbf{w}_{m,k} }{ \sum_{i =1}^K | \mathbf{h}_{m,k}^H \mathbf{w}_{m,i} |^2 + \frac{\sigma_{m,k}^2}{P_t} \|\mathbf{W}_m \|_F^2 }, \forall m,k.
\end{align}

Then, for any given $\boldsymbol{\lambda}_m$ and $\boldsymbol{\mu}_m$, problem \eqref{problem_P_2} is an unconstrained convex optimization problem for $\mathbf{W}_m$. Based on the first-order optimality condition $\frac{\partial g(\mathbf{W}_m, \boldsymbol{\lambda}_m, \boldsymbol{\mu}_m)}{\partial \mathbf{W}_m} = 0$, the optimal $\mathbf{W}_m$ is given by 
\begin{align} \label{opt_solution_W}
    &\mathbf{W}_m = \Bigg( \frac{1}{\rho} \mathbf{I}_N + \sum_{k=1}^K |\lambda_{m,k}|^2 \Big( \mathbf{h}_{m,k} \mathbf{h}_{m,k}^H + \frac{\sigma_{m,k}^2}{P_t} \mathbf{I}_N \Big) \Bigg)^{-1} \nonumber \\
    &\times \left(\frac{1}{\rho} \mathbf{A} \mathbf{T}_m \mathbf{D}_m + \sum_{k=1}^K \sqrt{1 + \mu_{m,k}} \lambda_{m,k}^* \mathbf{h}_{m,k} \mathbf{e}_k^T\right),
\end{align}    
where $\mathbf{e}_k \in \mathbb{R}^{K \times 1}$ is a vector whose $k$-th entry is one while the others are zero.

\subsubsection{Overall Algorithm, Convergence, and Complexity}
Building upon the solutions developed for each optimization block, the principles of the penalty method, and the results in \textbf{Lemma \ref{lemma_1}}, the overall penalty-based FDA algorithm for solving problem \eqref{problem_se_max} for fully-connected architectures is summarized in \textbf{Algorithm \ref{alg:penalty_FDA}}. Since the objective value is non-increasing in each step of block coordinate descent in the inner loop and is bounded from below, the convergence of \textbf{Algorithm \ref{alg:penalty_FDA}} is guaranteed. Furthermore, \textbf{Algorithm \ref{alg:penalty_FDA}} is computationally efficient, as the optimization variables in each step are updated by either the closed-form solution or the low-complexity one-dimensional search. It can be shown that the complexities of updating $\mathbf{A}$, $\mathbf{T}_m$, and $\mathbf{V}_m$ in \textbf{Algorithm \ref{alg:penalty}} are in order of $\mathcal{O}(M N)$, $\mathcal{O}(M N^2 N_{\mathrm{RF}} Q / N_{\mathrm{T}})$, and $\mathcal{O}(M(NKN_{\mathrm{RF}} + (NN_{\mathrm{T}} +K ) N_{\mathrm{RF}}^2 + N_{\mathrm{RF}}^3))$, respectively, where $\mathcal{O}(\cdot)$ is the big-O notation. Furthermore, the complexities for updating $\mathbf{D}_m$, $\boldsymbol{\mu}_{m,k}$, $\boldsymbol{\lambda}_{m,k}$, and $\mathbf{W}_m$ in \textbf{Algorithm \ref{alg:penalty_FDA}} are in order of $\mathcal{O}(M (2N N_{\mathrm{RF}}^2 +  N_{\mathrm{RF}}^3 + N N_{\mathrm{RF}} K ))$, $\mathcal{O}(MK(N + K(K-1)N + N^2 ))$, $\mathcal{O}(MK((K^2+1)N + N^2 ))$, and $\mathcal{O}( M( (KN_{\mathrm{RF}} + K^2) N + K N^2 + N^3) )$, respectively.

\begin{algorithm}[tb]
    \caption{Penalty-FDA Approach for Fully-connected Architectures.}
    \label{alg:penalty_FDA}
    \begin{algorithmic}[1]
        \STATE{initialize $\mathbf{A}$, $\mathbf{T}_m$, $\mathbf{D}_m$, $\mathbf{V}_m$  $\mathbf{W}_m$, $0<c<1$, and $\rho > 0$}
        \REPEAT
        \REPEAT
        \STATE{update $\mathbf{A}$ and $\mathbf{T}_m, \forall m,$ by \textbf{Algorithm \ref{alg:penalty}}}
        \STATE{update $\mathbf{D}_m, \forall m,$ by \eqref{opt_solution_D} }
        \STATE{update $\boldsymbol{\mu}_{m,k}$ and $\boldsymbol{\lambda}_{m,k}, \forall m,k,$ by \eqref{opt_solution_mu} and \eqref{opt_solution_lambda}}
        \STATE{update $\mathbf{W}_m, \forall m,$ by \eqref{opt_solution_W} }
        \UNTIL{the fractional increase of the objective value of problem \eqref{problem_se_max_new_3} falls below a threshold}
        \STATE{update penalty factor as $\rho = c \rho$ }
        \UNTIL{the penalty value falls below a threshold}
        \STATE{scale the digital beamformer as $\mathbf{D}_m = \frac{\sqrt{P_t}}{\|\tilde{\mathbf{A}} \tilde{\mathbf{T}}_m \tilde{\mathbf{D}}_m \|_F^2} \tilde{\mathbf{D}}_m$ }
    \end{algorithmic}
\end{algorithm}

\vspace{-0.2cm}
\subsection{Solution for Sub-connected Architectures}
For the sub-connected architecture, the BCD method can also be employed to solve problem \eqref{problem_se_max_new_3}. It can be shown that for fixed $\mathbf{A}$ and $\mathbf{T}_m$, the blocks $\mathbf{D}_m$ and $\mathbf{W}_m$ can be updated using the solutions in \eqref{opt_solution_D} and \eqref{opt_solution_W}, respectively. Therefore, in this subsection, we focus on the subproblem with respect to $\mathbf{A}$ and $\mathbf{T}_m$, which is given by  
\begin{subequations} \label{problem_sub}
    \begin{align}
        \label{obj_1_sub}
        \min_{\mathbf{A}, \mathbf{T}_m}  \quad & \sum_{m=1}^M \left\|\mathbf{W}_m-  \mathbf{A} \mathbf{T}_m \mathbf{D}_m \right\|_F^2 \\
        \mathrm{s.t.} \quad & \eqref{constraint:2}-\eqref{constraint:4}.
    \end{align}
\end{subequations}
Unlike the fully-connected architecture, the analog beamforming matrix $\mathbf{A} \mathbf{T}_m$ in the sub-connected architecture is a block-diagonal matrix rather than a full matrix according to \eqref{ps_sub} and \eqref{ps_sub_2}. By exploiting this spectral structure of matrix $\mathbf{A} \mathbf{T}_m$, problem \eqref{problem_sub} can be directly solved without introducing additional auxiliary variable $\mathbf{V}_m$. In particular, according to \eqref{ps_sub}, the objective function can be equivalently written as
\begin{align} \label{obj_1_sub_2}
    &\eqref{obj_1_sub} = \sum_{n = 1}^{N_{\mathrm{RF}}} \sum_{m=1}^M \left\| \mathbf{\Psi}_{m,n} - \mathbf{A}_{n} e^{-j 2\pi f_m \mathbf{t}_n} \mathbf{p}_{m,n}^H \right\|_F^2 \nonumber \\
    &= \sum_{n=1}^{N_{\mathrm{RF}}} \sum_{m=1}^M \Big( \Xi_{m,n} \mathbf{p}_{m,n}^H \mathbf{p}_{m,n} + \mathrm{tr} \left( \mathbf{\Psi}_{m,n} \mathbf{\Psi}_{m,n}^H \right) \nonumber \\
    &\hspace{1.8cm}- \Re \left\{ \mathbf{p}_{m,n}^H \mathbf{\Psi}_{m,n}^H \mathbf{A}_n e^{-j 2\pi f_m \mathbf{t}_n}  \right\} \Big),
\end{align}
where $\mathbf{\Psi}_{m,n} = \mathbf{W}_m((n-1)N_{\mathrm{sub}}+1:nN_{\mathrm{sub}}, :)$, $\mathbf{p}_{m,n} = \mathbf{D}_m^H(:,n)$, and 
\begin{align}
    \Xi_{m,n} = \sum_{l=1}^{N_{\mathrm{T}}} \mathbf{a}_{n,l}^H \mathbf{a}_{n,l} e^{j 2 \pi f_m t_{n,l}} e^{-j 2 \pi f_m t_{n,l}} = N_{\mathrm{sub}},
\end{align}
where is obtained based on \eqref{ps_sub_2} and the unit-modulus constraint on $\mathbf{a}_{n,l}$. Based on the above result, it can be concluded that only the last term in \eqref{obj_1_sub_2} is related to the optimization variable of interest, which can be further simplified as 
\begin{align} \label{obj_1_sub_3}
    &\sum_{n = 1}^{N_{\mathrm{RF}}} \sum_{m=1}^M \Re \left\{ \mathbf{p}_{m,n}^H \mathbf{\Psi}_{m,n}^H \mathbf{A}_n e^{-j 2\pi f_m \mathbf{t}_n}  \right\} \nonumber\\
    = &\sum_{n = 1}^{N_{\mathrm{RF}}} \sum_{l = 1}^{N_{\mathrm{T}}} \sum_{m=1}^M \Re\left\{ \boldsymbol{\phi}_{m,n,l}^H \mathbf{a}_{n,l} e^{-j 2 \pi f_m t_{n,l}}  \right\},
\end{align} 
where $\boldsymbol{\phi}_{m,n,l} = \mathbf{\Phi}_{m,n}\big((l-1)\frac{N_{\mathrm{sub}}}{N_{\mathrm{T}}}+1:l\frac{N_{\mathrm{sub}}}{N_{\mathrm{T}}}\big)$ and $\mathbf{\Phi}_{m,n} = \mathbf{\Psi}_{m,n} \mathbf{p}_{m,n}$.

\begin{algorithm}[tb]
    \caption{Penalty-FDA Approach for Sub-connected Architectures.}
    \label{alg:penalty_FDA_sub}
    \begin{algorithmic}[1]
        \STATE{initialize $\mathbf{A}$, $\mathbf{T}_m$, $\mathbf{D}_m$, $\mathbf{V}_m$  $\mathbf{W}_m$, $0<c<1$, and $\rho > 0$}
        \REPEAT
        \REPEAT
        \STATE{update $\mathbf{A}$ by \eqref{opt_solution_PS_sub}}
        \STATE{update $\mathbf{T}_m, \forall m,$ by \eqref{opt_solution_TTD_sub}}
        \STATE{update $\mathbf{D}_m, \forall m,$ by \eqref{opt_solution_D} }
        \STATE{update $\boldsymbol{\mu}_{m,k}$ and $\boldsymbol{\lambda}_{m,k}, \forall m,k,$ by \eqref{opt_solution_mu} and \eqref{opt_solution_lambda}}
        \STATE{update $\mathbf{W}_m, \forall m,$ by \eqref{opt_solution_W} }
        \UNTIL{the fractional increase of the objective value of problem \eqref{problem_se_max_new_3} falls below a threshold}
        \STATE{update penalty factor as $\rho = c \rho$ }
        \UNTIL{the penalty value falls below a threshold}
        \STATE{scale the digital beamformer as $\mathbf{D}_m = \frac{\sqrt{P_t}}{\|\tilde{\mathbf{A}} \tilde{\mathbf{T}}_m \tilde{\mathbf{D}}_m \|_F^2} \tilde{\mathbf{D}}_m$ }
    \end{algorithmic}
\end{algorithm}

Given the above derivations, the variables $\mathbf{a}_{n,l}$ and $t_{n,l}$ can be optimized individually to maximize \eqref{obj_1_sub_3} for different values of $n$ and $l$, leading to the following optimization problem:
\begin{subequations}
    \begin{align}
        \max_{\mathbf{a}_{n,l}, t_{n,l}}  \quad & \sum_{m=1}^M \Re\left\{ \boldsymbol{\phi}_{m,n,l}^H \mathbf{a}_{n,l} e^{-j 2 \pi f_m t_{n,l}}  \right\}\\
        \mathrm{s.t.} \quad &  |\mathbf{a}_{n,l}(i)| = 1, \forall i, \\
        & t_{n,l} \in [0, t_{\max}].
    \end{align}
\end{subequations}
For any given $t_{n,l}$, the optimal solution of $\mathbf{a}_{n,l}$ is given by 
\begin{equation} \label{opt_solution_PS_sub}
    \mathbf{a}_{n,l} = e^{j \angle \left(\sum_{m=1}^M \boldsymbol{\phi}_{m,n,l} e^{j 2 \pi f_m t_{n,l}}\right)}.
\end{equation} 
For any given $\mathbf{a}_{n,l}$, the time delay $t_{n,l}$ can be obtained by the low-complex one-dimensional search, resulting in the following solution:
\begin{equation} \label{opt_solution_TTD_sub}
    t_{n,l} = \arg \max_{t_{n,l} \in \mathcal{Q}} \sum_{m=1}^M \Re\left\{ \boldsymbol{\phi}_{m,n,l}^H \mathbf{a}_{n,l} e^{-j 2 \pi f_m t_{n,l}}  \right\}.
\end{equation}     

The FDA approach based on BCD for sub-connected architectures is summarized in \textbf{Algorithm \ref{alg:penalty_FDA_sub}}. The convergence of this algorithm is guaranteed because the objective value is non-increasing in each step of block coordinate descent in the inner loop and is bounded from below. In each iteration, the complexities of updating $\mathbf{A}$ and $\mathbf{T}_m$ are given by $\mathcal{O}(M (N^2K^2 + N) )$ and $\mathcal{O}(M N^2 Q / (N_{\mathrm{RF}} N_{\mathrm{T}}) )$, respectively, while the complexities of updating $\mathbf{D}_m$, $\boldsymbol{\mu}_{m,k}$, $\boldsymbol{\lambda}_{m,k}$, and $\mathbf{W}_m$ are the same as the counterpart in \textbf{Algorithm \ref{alg:penalty_FDA}}. We note that the main complexity of the proposed \textbf{Algorithm \ref{alg:penalty_FDA}} and \textbf{Algorithm \ref{alg:penalty_FDA_sub}} stems from the matrix multiplication and inversion for calculating the closed-form updates over a large number of iterations. To accelerate the algorithm and facilitate its practical implementation, deep unfolding can be employed to transform the iterative algorithm into a trainable deep neural network with a small number of layers, based on the structure of the closed-form updates in each iteration \cite{hu2020iterative, shlezinger2022model}.
\section{Heuristic Two-stage Approach} \label{sec:HTS}

Although the FDA approach directly optimizes the hybrid beamformer and can obtain a stationary point with the aid of the penalty method, it has the following drawbacks.
\begin{itemize}
    \item First, the FDA approach requires the joint optimization of the equivalent fully-digital beamformers and the analog beamformers. In near-field communications with ELAAs, the dimension of these beamformers is typically quite large, which leads to relatively high computational complexity for solving this optimization problem.
    \item Second, the FDA approach requires full knowledge of the CSI. Similarly, due to the large number of antenna elements in near-field communications, the complexity of the channel estimation can be extremely high.
\end{itemize}
To address these issues, in this section, we propose an HTS approach, where the analog and digital beamformers are designed in two separate stages. At the cost of slight performance loss, this approach not only significantly reduces the complexity of optimization, but also simplifies channel estimation with the aid of fast beam training or beam tracking \cite{zhang2022fast, chen2023beam}. In particular, the communication protocol including the proposed HTS approach and beam training is illustrated in Fig. \ref{fig:protocol}. The advantages of this protocol in reducing the computational complexity and signaling overhead of channel estimation have been widely demonstrated in the literature \cite{heath2016overview, tan2021wideband}. Thus, we only focus on the beamforming design in the following.

\vspace{-0.3cm}
\subsection{Heuristic Analog Beamforming Optimization}
The design of the analog beamformer is based on the results of low-complexity near-field beam training or tracking \cite{zhang2022fast, chen2023beam}. The spatial wideband effect can be avoided by utilizing only a small fraction of the bandwidth during beam training or tracking. We note that beam training selects the beam from a predefined codebook that has the maximum received power at the user. Given that LoS channels are much more dominant than NLoS channels, the beam obtained from beam training will be consistent with LoS channels. In other words, the approximate location of the user $(\theta_k, r_k)$ can be obtained by checking the location of the obtained beam in the predefined codebook. Furthermore, beam tracking is an ongoing process that adjusts the beam in response to changes in the environment or the movement of the user, thus avoiding the searching process in beam training. In this paper, we assume that the beam obtained by beam training or tracking perfectly matched the optimal beam to establish a performance upper bound. Without the full CSI, the analog beamformer can be designed based on this location information related to LoS channels. In particular, the received signal of user $k$ on subcarrier $m$ through the LoS channel is given by  
\begin{equation} \label{LoS_only_signal}
    \bar{y}_{m,k} = \beta_{m,k} \mathbf{b}^T(f_m,\theta_k, r_k) \mathbf{A} \mathbf{T}_m \mathbf{D}_m \mathbf{s}_m + n_{m,k}.
\end{equation} 
In the following, we focus on the analog beamforming design for fully-connected architectures, which can be easily extended to cases of sub-connected architectures.

\begin{figure}[t!]
    \centering
    \includegraphics[width=0.45\textwidth]{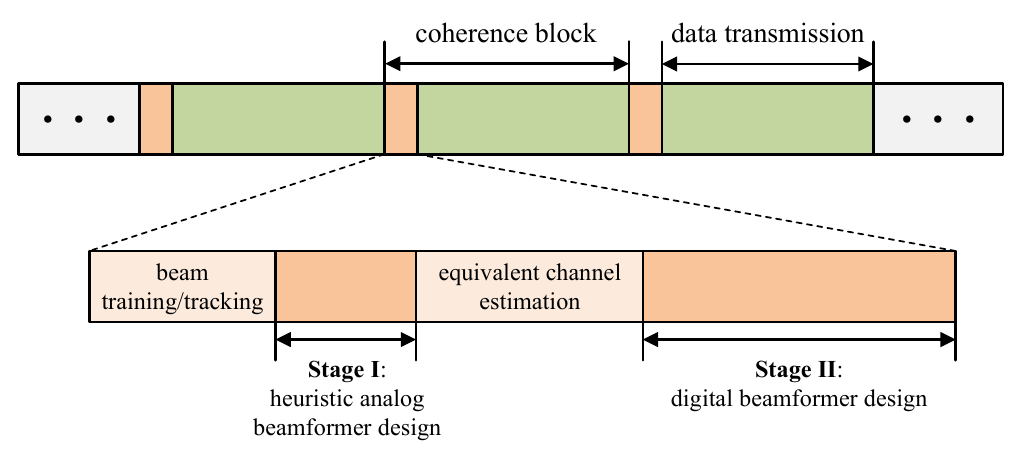}
    \caption{Communication protocol including the proposed HTS approach and beam training/tracking.}
    \label{fig:protocol}
\end{figure}

Given the user locations, the main idea of the heuristic analog beamforming design is to the analog beamformer for each RF chain to maximize the signal power for a specific user through the LoS channel \cite{alkhateeb2015limited}. Specifically, $K$ out of $N_{\mathrm{RF}}$ RF chains is used to serve the $K$ users. For fully-connected architectures, the analog beamformer on subcarrier $m$ for the $k$-th RF chain is given by  
\begin{equation} \label{eqn:hybrid_vector}
    \mathbf{v}_{m,k} = \begin{bmatrix}
        \mathbf{a}_{k,1} e^{-j 2\pi f_m t_{k,1}} \\[-0.4em]
        \vdots \\
        \mathbf{a}_{k,N_{\mathrm{T}}} e^{-j 2\pi f_m t_{k, N_{\mathrm{T}}}}
    \end{bmatrix}.
\end{equation} 
Ideally, given the location information $\theta_k$ and $r_k$ of user $k$, the optimal analog beamformer $\mathbf{v}_{m,k}$ that maximizes the signal power at user $k$ through the LoS channel is given by 
\begin{align} \label{opt_array_beamformer}
    \mathbf{v}_{m,k}^{\mathrm{opt}} = &\arg \max_{\mathbf{v}_{m,k}} \big| \mathbf{b}^T(f_m, \theta_k, r_k) \mathbf{v}_{m,k}  \big| \nonumber \\
    = & \mathbf{b}^*(f_m, \theta_k, r_k) e^{j\psi_{m,k}}, \forall m,
\end{align}
where $\psi_{m,k}$ is an arbitrary phase. Unfortunately, due to the hardware limitation imposed on $\mathbf{v}_{m,k}$, it is generally impossible to find $\mathbf{a}_{k,l}$ and $t_{k,l}, \forall l,$ such that $\mathbf{v}_{m,k} = \mathbf{v}_{m,k}^{\mathrm{opt}}, \forall m$. As a remedy, in the following, we propose two low-complexity analog beamforming designs based on the approximated and exact LoS channels, respectively.

\begin{figure}[t!]
    \centering
    \includegraphics[width=0.3\textwidth]{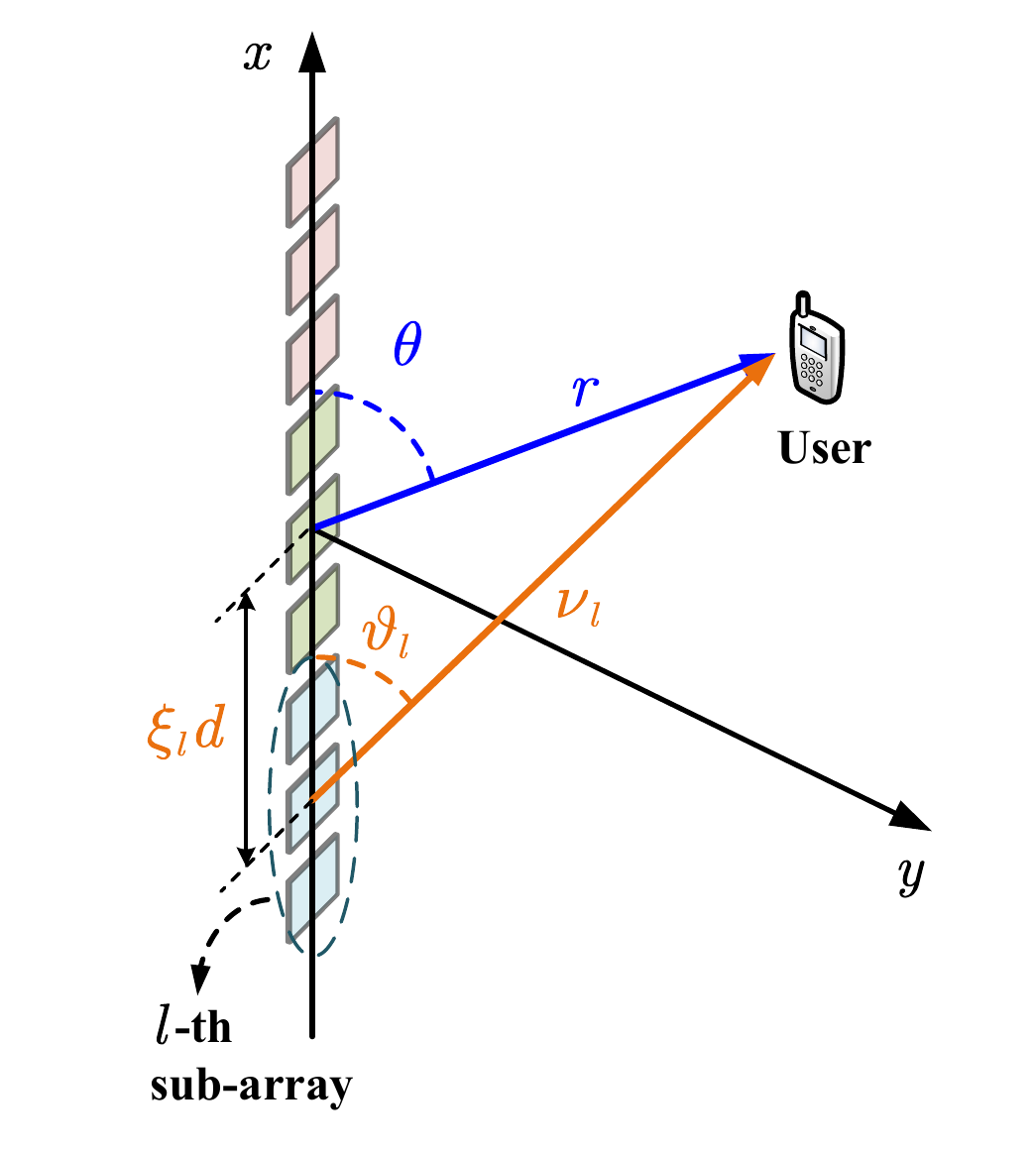}
    \caption{Illustration of the piecewise-near-field channel model.}
    \label{fig:channel_approx}
\end{figure}

\subsubsection{Piecewise-Near-Field (PNF)-based Design} \label{sec_non_robust}
This design employs a PNF approximation of the LoS channels. In practice, the PNF approximation is based on the fact that the entire antenna array is divided into $N_{\mathrm{T}}$ sub-arrays by TTDs. Therefore, we divide the array response vector $\mathbf{b}_N(f_m, \theta, \nu)$ into $N_{\mathrm{T}}$ sub-vectors corresponding to $N_{\mathrm{T}}$ TTDs as follows:
\begin{equation} \label{eq_49}
    \mathbf{b}(f_m, \theta, r) = \begin{bmatrix}
        \boldsymbol{\varphi}_{m,1}(\theta, r) \\
        \vdots\\
        \boldsymbol{\varphi}_{m,N_{\mathrm{T}}}(\theta, r)
    \end{bmatrix},
\end{equation}
where $\boldsymbol{\varphi}_l(f_m, \theta, \nu) \in \mathbb{C}^{\frac{N}{N_{\mathrm{T}}} \times 1}$ denotes a vector consisting of entries from the $((l-1)N_{\mathrm{T}}+1)$-th to the $lN_{\mathrm{T}}$-th row of vector $\mathbf{b}(f_m, \theta, \nu)$.  
Let $\vartheta_{k,l}$ and $\nu_{k,l}$ denote the angle and distance of the user $k$ with respect to the center of the $l$-th sub-array, respectively, and also denote the array response vector of the $l$-th sub-array. According to the geometric relationship illustrated in Fig. \ref{fig:channel_approx}, they can be calculated as follows: 
\begin{align}
    \vartheta_l = &\arccos\left(\frac{r \cos\theta - \xi_l d}{\nu_l}\right), \\
    \label{eq_50}
    \nu_l = &\sqrt{r^2 + \xi_l^2 d^2 - 2 r \xi_l d \cos \theta},
\end{align}  
where $\xi_l = (l - 1 - \frac{N_{\mathrm{T}}-1}{2}) N/N_{\mathrm{T}}$. Then, the propagation distance with respect to the $q$-th element of the $l$-th sub-array can be recalculated as 
\begin{equation} \label{eqn_v_l_q}
    \tilde{\nu}_{l,q} = \sqrt{ \nu_l + \tilde{\chi}_q^2 d^2 - 2 \nu_l \tilde{\chi}_q d  \cos \vartheta_l },
\end{equation}
where $\tilde{\chi}_q = q -1 - \frac{N/N_{\mathrm{T}}-1}{2}$. Therefore, the $n$-th entry of the array response vector $\boldsymbol{\varphi}_{m,l}(\theta, r)$, denoted by $\varphi_{m,l,q}$, can be rewritten as follows:
\begin{align} \label{b_vec_1}
    &\varphi_{m,l,q} = e^{-j \frac{2 \pi f_m}{c} (\tilde{\nu}_{l,q} - r) } = e^{-j \frac{2 \pi f_m}{c} (\nu_l-r) } e^{-j \frac{2 \pi f_m}{c} (\tilde{\nu}_{l,q} - \nu_l) }.
\end{align}
It can be observed that $\varphi_{m,l,q}$ is now divided into two components: $e^{-j \frac{2\pi f_m}{c} (\nu_l - r)}$ and $e^{-j \frac{2\pi f_m}{c} (\tilde{\nu}_{l,q} - \nu_l) }$.
The first component represents the phase shift attributable to the variance in propagation delay between the center of the sub-array and the center of the entire array, while the second component represents the phase shift arising from the difference in propagation delay within the sub-array. In order to maximize the array gain at all subcarriers, the time delay of TTDs should be designed to compensate for the difference in these propagation delays. However, the propagation delay in the second component is specific to each antenna element, making it difficult to address by a limited number of TTDs. To solve this issue, we approximate $\varphi_{m,l,q}$ as follows:
\begin{align} \label{approx_phi}
    \varphi_{m,l,q} &= e^{-j \frac{2 \pi f_m}{c} (\nu_l-r) } e^{-j \frac{2 \pi f_c}{c} (\tilde{\nu}_{l,q} - \nu_l) } e^{-j \frac{2 \pi (f_m - f_c)}{c} (\tilde{\nu}_{l,q} - \nu_l) } \nonumber \\
        &\approx e^{-j \frac{2 \pi f_m}{c} (\nu_l-r) } e^{-j \frac{2 \pi f_c}{c} (\tilde{\nu}_{l,q} - \nu_l) }.
\end{align}
In this approximation, the term $e^{-j \frac{2 \pi (f_m - f_c)}{c} (\tilde{\nu}_{l,q} - \nu_l) }$ is omitted, which is valid when the value of $(\tilde{\nu}_{l,q} - \nu_l)$ is sufficiently small, rendering the product $(f_m - f_c)(\tilde{\nu}_{l,q} - \nu_l)$ negligible. In particular, $(f_m - f_c)$ is determined by the system bandwidth, whereas $(\tilde{\nu}_{l,q} - \nu_l)$ is influenced by the size of sub-arrays. Consequently, reducing $(\tilde{\nu}_{l,q} - \nu_l)$ necessitates smaller sub-arrays, which in turn increases the requirement for more TTDs. Based on the approximation in \eqref{approx_phi}, the array response vector $\mathbf{b}(f_m, \theta, r)$ can be approximated as 
\begin{align} \label{eqn_approx_response}
    \hat{\mathbf{b}}(f_m, \theta, r) = \begin{bmatrix}
        \hat{\boldsymbol{\varphi}}_1(\theta, r) e^{-j \frac{2 \pi f_m}{c} (\nu_1-r) } \\
        \vdots\\
        \hat{\boldsymbol{\varphi}}_{N_{\mathrm{T}}}(\theta,r) e^{-j \frac{2 \pi f_m}{c} (\nu_{N_{\mathrm{T}}}-r) }
    \end{bmatrix},
\end{align}
where $\hat{\boldsymbol{\varphi}}_l(\theta, r) \in \mathbb{C}^{\frac{N}{N_{\mathrm{T}}} \times 1}$ is given by 
\begin{equation}
    \hat{\boldsymbol{\varphi}}_l(\theta, r) = \left[e^{-j \frac{2 \pi f_c}{c} (\tilde{\nu}_{l,1} - \nu_l)},\dots, e^{-j \frac{2 \pi f_c}{c} (\tilde{\nu}_{l,N/N_{\mathrm{T}}} - \nu_l)} \right]^T.
\end{equation} 
The relationship between the worst-case accuracy of the PNF approximation and the number of TTDs is given in the following proposition.

\begin{figure}[t!]
    \centering
    \includegraphics[width=0.45\textwidth]{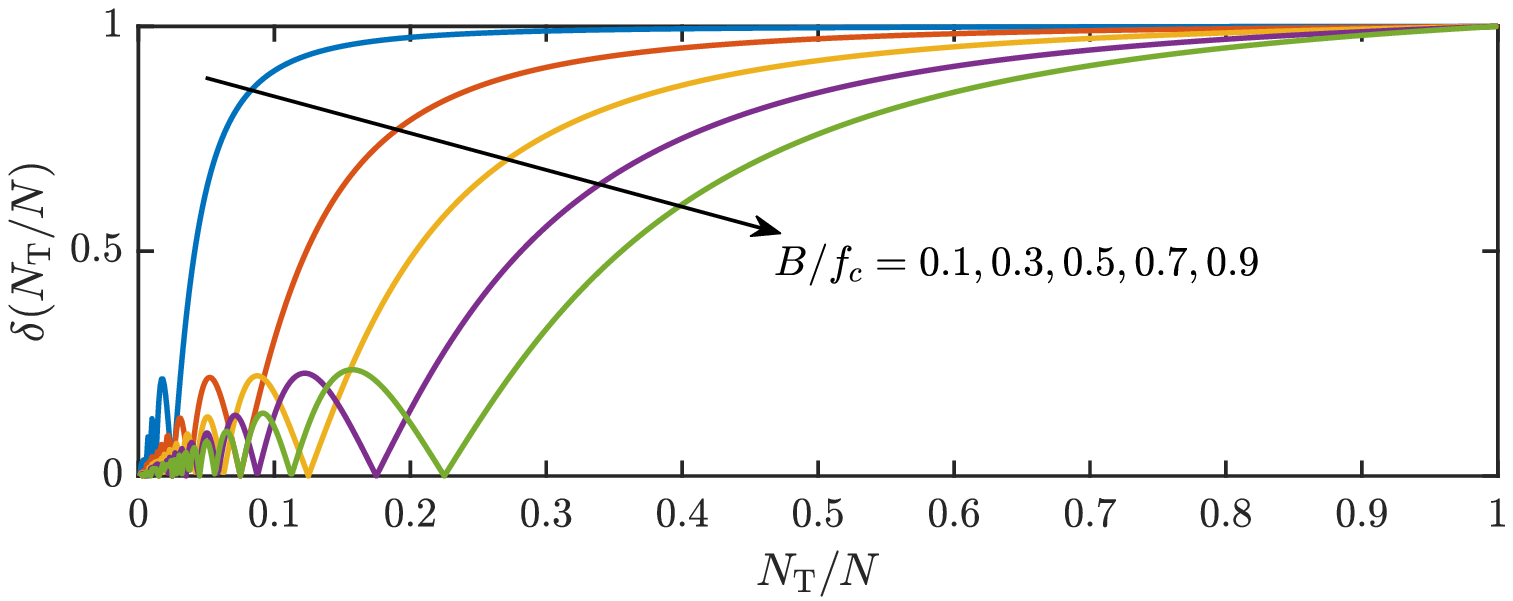}
    \caption{Illustration of function $\delta\big(\frac{N_{\mathrm{T}}}{N}, \frac{B}{f_c}\big)$.}
    \label{fig:error_function}
\end{figure}

\begin{proposition} \label{proposition_1}
    \normalfont Define the normalized array gain realized by $\hat{\mathbf{b}}(f_m, \theta, r)$ at the location $(\theta, r)$ as follows:
    \begin{equation}
        G(f_m, \theta, r) = \frac{1}{N} \left| \mathbf{b}^H(f_m, \theta, r) \hat{\mathbf{b}}(f_m, \theta, r) \right|.
    \end{equation}
    Then, given a positive threshold $\Delta$ slightly less than 1, to ensure that $\min_{f_m, \theta, r} G(f_m, \theta, r) \ge \Delta$, the following condition should be satisfied:
    \begin{equation} \label{threshold}
        \delta\left(\frac{N_{\mathrm{T}}}{N}, \frac{B}{f_c}\right) \triangleq \left| \frac{N_{\mathrm{T}}}{N} \frac{\sin \left( \frac{\pi B N}{4 f_c N_{\mathrm{T}}}  \right)}{\sin \left(\frac{\pi B}{4 f_c} \right)} \right| \ge \Delta.
    \end{equation}
\end{proposition}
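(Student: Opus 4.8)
The plan is to evaluate the normalized array gain $G(f_m,\theta,r)$ in closed form and then locate the triple $(f_m,\theta,r)$ at which it is smallest. First I would observe that the two-step geometric decomposition underlying \eqref{eqn_v_l_q} is \emph{exact}: writing the global element index as $n=(l-1)N/N_{\mathrm T}+q$ one checks $\chi_n=\xi_l+\tilde\chi_q$, so applying the law of cosines once to reach the centre of sub-array $l$ and once more within that sub-array reproduces the true propagation distance of \eqref{near_response}, i.e.\ $\tilde\nu_{l,q}=r_n(r,\theta)$. Consequently, multiplying the entries of $\mathbf b(f_m,\theta,r)$ by those of $\hat{\mathbf b}(f_m,\theta,r)$ in \eqref{eqn_approx_response}, the range offsets $r$ cancel and the carrier-frequency component combines with the subcarrier component, leaving
\[
 G(f_m,\theta,r)=\frac1N\Bigl|\,\sum_{l=1}^{N_{\mathrm T}}\sum_{q=1}^{N/N_{\mathrm T}} e^{\,j\frac{2\pi(f_m-f_c)}{c}\left(\tilde\nu_{l,q}-\nu_l\right)}\Bigr|.
\]
The residual mismatch is thus governed solely by the frequency offset $f_m-f_c$ and the intra-sub-array distance deviations $\tilde\nu_{l,q}-\nu_l$.

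Next I would argue that the worst case occurs at the band edge and at end-fire. Since every summand equals $1$ when $f_m=f_c$ and the summand phases grow linearly in $|f_m-f_c|$, $G$ is minimised over the subcarriers by taking $|f_m-f_c|$ as large as possible; in the continuous-band idealisation this is $B/2$ (the true discrete maximum $\tfrac{B(M-1)}{2M}$ is slightly smaller, so the value below is a genuine, asymptotically tight lower bound on $\min_m G$). For the geometry, the reverse triangle inequality applied to \eqref{eqn_v_l_q} gives $|\tilde\nu_{l,q}-\nu_l|\le|\tilde\chi_q|d$, with equality precisely when $\vartheta_l\in\{0,\pi\}$; this holds for \emph{all} sub-arrays simultaneously exactly when the user lies on the array axis ($\theta\in\{0,\pi\}$, with $r$ large enough to be outside the array footprint), in which case the one-dimensional geometry forces $\tilde\nu_{l,q}-\nu_l=\mp\tilde\chi_q d$ identically in $l$. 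Because the per-element phase spread is maximal there while staying within the first lobe of the resulting Dirichlet-type sum, this configuration yields the minimum of $G$.

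Finally I would evaluate $G$ at this configuration. The $N_{\mathrm T}$ sub-arrays now contribute identically, so $G=\tfrac{N_{\mathrm T}}{N}\bigl|\sum_{q=1}^{N/N_{\mathrm T}}e^{-j\frac{\pi B}{c}\tilde\chi_q d}\bigr|$; substituting the array spacing $d=\lambda_c/2=c/(2f_c)$ turns the exponent into $-j\tfrac{\pi B}{2f_c}\tilde\chi_q$, and since $\{\tilde\chi_q\}_{q=1}^{N/N_{\mathrm T}}$ is a symmetric arithmetic progression with unit spacing, the geometric series evaluates to the Dirichlet kernel $\bigl|\sin(\tfrac{\pi B N}{4f_cN_{\mathrm T}})/\sin(\tfrac{\pi B}{4f_c})\bigr|$. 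Hence $\min_{f_m,\theta,r}G(f_m,\theta,r)=\delta\!\left(\tfrac{N_{\mathrm T}}{N},\tfrac{B}{f_c}\right)$, and demanding this worst-case gain to be at least $\Delta$ is exactly the condition \eqref{threshold}.

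The main obstacle I anticipate is making the ``end-fire is worst'' claim fully rigorous: for a generic angle the deviations $\tilde\nu_{l,q}-\nu_l$ are not exactly affine in $q$, so the per-sub-array sum is not literally a Dirichlet kernel and one must show its magnitude never drops below the end-fire value. I would close this by combining the bound $|\tilde\nu_{l,q}-\nu_l|\le|\tilde\chi_q|d$ with monotonicity of $x\mapsto|\sin(N_0x/2)/\sin(x/2)|$ (with $N_0=N/N_{\mathrm T}$) on the interval preceding its first zero, which implicitly needs the mild operating condition $\tfrac{\pi B N}{4f_cN_{\mathrm T}}\le\pi$, i.e.\ $B/f_c\le 4N_{\mathrm T}/N$; the same monotonicity also absorbs the gap between the exact edge offset $\tfrac{B(M-1)}{2M}$ and $B/2$.
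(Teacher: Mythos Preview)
Your derivation is essentially the paper's: you obtain the same expression
\[
G(f_m,\theta,r)=\frac1N\Bigl|\sum_{l=1}^{N_{\mathrm T}}\sum_{q=1}^{N/N_{\mathrm T}} e^{j\frac{2\pi(f_m-f_c)}{c}(\tilde\nu_{l,q}-\nu_l)}\Bigr|,
\]
identify end-fire together with the band edge as the minimising configuration, and collapse the sum there to the Dirichlet kernel $\delta(N_{\mathrm T}/N,B/f_c)$.

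The one place you and the paper differ is precisely the step you flag as the main obstacle. The paper does \emph{not} prove analytically that the minimum sits at $\theta\in\{0,\pi\}$ and $f_m\in\{f_1,f_M\}$; it states this as an empirical observation (``based on numerous simulations''), illustrates it with a plot, and then evaluates $G$ at that configuration. Your reverse-triangle bound $|\tilde\nu_{l,q}-\nu_l|\le|\tilde\chi_q|d$ and the Dirichlet-kernel monotonicity idea therefore go \emph{beyond} the paper's own level of rigor. Note, however, that your proposed closure is not airtight either: the per-element phase bound does not by itself force the magnitude of a non-arithmetic exponential sum to dominate its arithmetic (end-fire) counterpart, so the monotonicity of $x\mapsto|\sin(N_0x/2)/\sin(x/2)|$ cannot be invoked directly once the deviations cease to be affine in $q$. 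The paper sidesteps this entirely by treating the worst-case location as a numerically verified fact rather than a lemma, so at the standard of the paper your argument is already complete.
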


\begin{IEEEproof}
    Please refer to Appendix \ref{proposition_1_proof}.
\end{IEEEproof}

According to \textbf{Proposition \ref{proposition_1}}, the accuracy of the PNF approximation in the worst-case scenario depends on two key ratios: $N_{\mathrm{T}}/N$ and $B/f_c$. The behavior of the function $\delta\big(\frac{N_{\mathrm{T}}}{N}, \frac{B}{f_c}\big)$ in relation to these ratios is depicted in Fig. \ref{fig:error_function}. It is evident that an increase in bandwidth $B$ necessitates a corresponding increase in the number of TTDs to maintain a specified accuracy threshold $\Delta$. For instance, consider a scenario with $\Delta=0.8$, $N=512$, and $f_c = 100$ GHz. In this context, as per the criterion given in \eqref{threshold}, a bandwidth of $B = 10$ GHz requires $N_{\mathrm{T}} \ge 36$ TTDs. Conversely, for a larger bandwidth of $B = 20$ GHz, the requirement escalates to $N_{\mathrm{T}} \ge 105$ TTDs. It is important to note that the above analysis pertains to the worst-case performance. When considering the average performance, the number of TTDs required can be significantly lower, which will be demonstrated by the numerical results presented in Section \ref{sec:result}.

Based on the approximated array response vector in \eqref{eqn_approx_response}, the average array gain realized by the hybrid analog beamformer $\mathbf{v}_{m,k}$ for user $k$ at all subcarriers can be approximated as 
\begin{align} \label{G_func}
    &\bar{G}(\theta_k, r_k) = \frac{1}{M} \sum_{m=1}^M \big| \hat{\mathbf{b}}^T(f_m, \theta_k, r_k) \mathbf{v}_{m,k}  \big| \nonumber \\
    &= \frac{1}{M} \sum_{m=1}^M \left|\sum_{l=1}^{N_{\mathrm{T}}} \hat{\boldsymbol{\varphi}}_l^T(\theta_k, r_k) \mathbf{a}_{k,l} e^{-j 2\pi f_m \left( \frac{\nu_l(\theta_k, r_k) - r_k}{c} + t_{k,l}  \right)} \right|.
\end{align}
Then, the analog beamforming design problem for the $k$-th RF chain can be formulated as 
\begin{subequations} \label{HTS_opt}
    \begin{align}
        \max_{\mathbf{a}_{k,l}, t_{k,l}}  \quad & \bar{G}(\theta_k, r_k) \\
        \mathrm{s.t.} \quad &  |\mathbf{a}_{k,l}(i)| = 1, \forall l, i, \\
        & t_{k,l} \in [0, t_{\max}], \forall l.
    \end{align}
\end{subequations}
The above optimization problem is essentially a phase alignment problem. According to \eqref{G_func}, to maximize $\bar{G}(\theta_k, r_k)$, the PS-based analog beamformer $\mathbf{a}_{k,l}$ can be designed to match the phase of the frequency-independent vector $\hat{\boldsymbol{\varphi}}_l^T(\theta_k, r_k)$, resulting the following solution:
\begin{equation} \label{opt_PS} 
    \mathbf{a}_{k,l} = \hat{\boldsymbol{\varphi}}_l^*(\theta_k, r_k).
\end{equation}  
Given the above solution, problem \eqref{HTS_opt} reduces to 
\begin{subequations} \label{HTS_opt_2}
    \begin{align}
        \max_{t_{k,l}}  \quad & \sum_{m=1}^M \left| \sum_{l=1}^{N_{\mathrm{T}}} e^{-j 2\pi f_m \left( \frac{\nu_l(\theta_k, r_k) - r_k}{c} + t_{k,l}  \right)} \right| \\
        \mathrm{s.t.} \quad & t_{k,l} \in [0, t_{\max}], \forall l.
    \end{align}
\end{subequations}
For the solution to this optimization problem, we have the following proposition.
\begin{proposition} \label{proposition_2}
    \normalfont If $t_{\max} \ge \frac{N (N_{\mathrm{T}} - 1)d}{N_{\mathrm{T}} c}$, the optimal solution to problem \eqref{HTS_opt_2} is given by 
    \begin{equation}
        t_{k,l} = \frac{\nu_{\max,k} - \nu_l(\theta_k, r_k)}{c},
    \end{equation}
    where $\nu_{\max,k} = \max \{\nu_1(\theta_k, r_k),\dots, \nu_{N_{\mathrm{T}}}(\theta_k, r_k) \}$. 
\end{proposition}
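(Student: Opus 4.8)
The plan is to prove the proposition by a sandwich argument: first establish a universal upper bound on the objective of \eqref{HTS_opt_2} valid for \emph{every} feasible delay vector, and then show that the proposed closed-form delays are (i) feasible under the hypothesis on $t_{\max}$ and (ii) attain that bound, which immediately makes them optimal. For the upper bound I would apply the triangle inequality to each summand: since every term $e^{-j 2\pi f_m(\cdot)}$ has unit modulus, $\big|\sum_{l=1}^{N_{\mathrm{T}}} e^{-j 2\pi f_m(\cdot)}\big| \le N_{\mathrm{T}}$ for each $m$, hence the objective is at most $M N_{\mathrm{T}}$. Equality in the $m$-th term requires all $N_{\mathrm{T}}$ phasors to be aligned, i.e. $\frac{\nu_l(\theta_k,r_k)-r_k}{c}+t_{k,l}$ to be independent of $l$; since the subcarrier frequencies $\{f_m\}$ are distinct, simultaneous alignment over all $m$ forces this ``effective delay'' to be a single constant in $l$ (the $2\pi$ wrap-arounds cannot be reconciled across different $f_m$). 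This both explains the structure of any maximizer and motivates the candidate.

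Next I would verify the candidate directly. Substituting $t_{k,l}=\frac{\nu_{\max,k}-\nu_l(\theta_k,r_k)}{c}$ makes the effective delay equal to the $l$-independent constant $\frac{\nu_{\max,k}-r_k}{c}$, so the $m$-th summand becomes $N_{\mathrm{T}}\,e^{-j 2\pi f_m (\nu_{\max,k}-r_k)/c}$, of modulus $N_{\mathrm{T}}$, and the objective equals $M N_{\mathrm{T}}$, i.e. the global maximum. It then remains only to check feasibility. By definition $\nu_{\max,k}\ge \nu_l(\theta_k,r_k)$ for all $l$, so $t_{k,l}\ge 0$ is automatic; and $t_{k,l}\le t_{\max}$ for all $l$ is equivalent to $\nu_{\max,k}-\nu_{\min,k}\le c\,t_{\max}$, where $\nu_{\min,k}=\min_l \nu_l(\theta_k,r_k)$.

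The one substantive step is therefore the geometric bound on $\nu_{\max,k}-\nu_{\min,k}$. Here I would use that, by \eqref{eq_50}, $\nu_l(\theta_k,r_k)$ is the Euclidean distance from user $k$ to the center of the $l$-th sub-array, which sits at offset $\xi_l d$ on the array axis with $\xi_l=(l-1-\frac{N_{\mathrm{T}}-1}{2})N/N_{\mathrm{T}}$. The reverse triangle inequality gives $|\nu_l-\nu_{l'}|\le |\xi_l-\xi_{l'}|\,d$ for any $l,l'$, and the right-hand side is largest for the two outermost sub-arrays, yielding $\nu_{\max,k}-\nu_{\min,k}\le (\xi_{N_{\mathrm{T}}}-\xi_1)\,d=\frac{N(N_{\mathrm{T}}-1)}{N_{\mathrm{T}}}\,d$. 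Combined with the hypothesis $t_{\max}\ge \frac{N(N_{\mathrm{T}}-1)d}{N_{\mathrm{T}} c}$, this shows $t_{k,l}\in[0,t_{\max}]$ for all $l$, so the candidate is feasible and hence optimal.

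I do not anticipate a genuine obstacle: the argument is essentially triangle inequality plus one elementary geometric estimate. The only point that needs a little care is making the ``alignment across all $f_m$'' reasoning precise enough to conclude the effective delay must be constant in $l$; strictly speaking this is not even required for optimality, since the universal bound $M N_{\mathrm{T}}$ together with its attainment by the candidate already closes the proof, so that reasoning can be kept brief or relegated to a remark on uniqueness of the maximizer.
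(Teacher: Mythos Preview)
Your proposal is correct and follows essentially the same approach as the paper: triangle inequality to get the universal upper bound $MN_{\mathrm{T}}$, direct verification that the candidate attains it, and a geometric bound on $\nu_{\max,k}-\nu_{\min,k}$ to confirm feasibility. The one minor difference is that you obtain the geometric bound via the reverse triangle inequality $|\nu_l-\nu_{l'}|\le|\xi_l-\xi_{l'}|d$, whereas the paper argues that the worst case occurs at $\theta_k\in\{0,\pi\}$ and computes directly; your route is arguably cleaner since it avoids having to justify where the extremum in $\theta_k$ is attained.
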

\begin{IEEEproof}
    Please refer to Appendix \ref{proposition_2_proof}. 
\end{IEEEproof}

In \textbf{Proposition \ref{proposition_2}}, we prove that the optimal TTD coefficients to problem \eqref{HTS_opt_2} have the closed-form solution when $t_{\max}$ is sufficiently large. When $t_{\max}$ is small, problem \eqref{HTS_opt_2} can be solved through an iterative one-dimensional search. More specifically, in each iteration, each time delay $t_{k,l}$ is updated by one-dimensional search as given in \eqref{opt_solution_TTD} and \eqref{opt_solution_TTD_sub} by fixing the other time delays. The complexity of the one-dimensional search in each iteration is $\mathcal{O}(QM N_{\mathrm{T}})$.

\subsubsection{Robust Design} \label{sec_robust}
Although the idea of the PNF-based design can facilitate the closed-form solution of the analog beamformers, it might result in substantial performance degradation under certain conditions. Specifically, this design relies extensively on the accuracy of the PNF approximation of the channel responses. However, this accuracy can diminish significantly when the number of TTDs $N_{\mathrm{T}}$ is small, or when the system operates over an ultra-large bandwidth $B$, as illustrated in Fig. \ref{fig:error_function}. To address this issue, in this section, we further propose a robust design that jointly optimizes the PS-based and TTD-based analog beamformers based on the exact LoS channel. The exact average array gain achieved by the overall analog beamformer for user $k$ is given by 
\begin{align}
    \tilde{G}(\theta_k, r_k) = &\frac{1}{M} \sum_{m=1}^M \left| \mathbf{b}^T(f_m, \theta_k, r_k) \mathbf{v}_{m,k}  \right| \nonumber \\
    = & \frac{1}{M} \sum_{m=1}^M \left|\sum_{l=1}^{N_{\mathrm{T}}} \boldsymbol{\varphi}_{m,l}^T(\theta_k, r_k) \mathbf{a}_{k,l} e^{-j 2\pi f_m t_{k,l}} \right|,
\end{align}
where $\boldsymbol{\varphi}_{m,l}$ is defined in \eqref{eq_49}. In the sequel, we jointly optimize the PS-based and TTD-based analog beamformers to maximize the exact average array gain, resulting in the following optimization problem:
\begin{subequations} \label{HTS_opt_3}
    \begin{align}
        \label{HTS_opt_3_obj}
        \max_{\mathbf{a}_{k,l}, t_{k,l}}  \quad & \sum_{m=1}^M \left|\sum_{l=1}^{N_{\mathrm{T}}} \boldsymbol{\varphi}_{m,l}^T(\theta_k, r_k) \mathbf{a}_{k,l} e^{-j 2\pi f_m t_{k,l}} \right| \\
        \mathrm{s.t.} \quad &  |\mathbf{a}_{k,l}(i)| = 1, \forall l, i, \\
        & t_{k,l} \in [0, t_{\max}], \forall l.
    \end{align}
\end{subequations}
This problem can also be solved using BCD given as follows.

When $t_{k,l}$ is fixed, problem \eqref{HTS_opt_3} can be reformulated as 
\begin{subequations} \label{HTS_opt_4}
    \begin{align}
        \max_{\bar{\mathbf{a}}_k}  \quad & \sum_{m=1}^M \left|\boldsymbol{\eta}_{m,k}^H \bar{\mathbf{a}}_k\right|\\
        \mathrm{s.t.} \quad & |\bar{\mathbf{a}}_{k}(n)| = 1, \forall n,
    \end{align}
\end{subequations}
where $\bar{\mathbf{a}}_k = [ \mathbf{a}_{k,1}^T,\dots,\mathbf{a}_{k, N_{\mathrm{T}}}^T ]^T$ and
\begin{equation}
    \boldsymbol{\eta}_{m,k} = \begin{bmatrix}
        \boldsymbol{\varphi}_{m,1}^*(\theta_k, r_k) e^{j 2\pi f_m t_{k,1}} \\
        \vdots\\
        \boldsymbol{\varphi}_{m,N_{\mathrm{T}}}^*(\theta_k, r_k) e^{j 2\pi f_m t_{k,N_{\mathrm{T}}}}
    \end{bmatrix}.
\end{equation}
This optimization problem is non-convex. We propose a low-complexity iterative algorithm to solve it based on the majorization-minimization (MM) method \cite{sun2016majorization}. In particular, we first construct the following surrogate function of the objective function:
\begin{align}
f \left( \bar{\mathbf{a}}_k | \bar{\mathbf{a}}_k^{(t)} \right) \triangleq \sum_{m=1}^M \Re\left\{\frac{ \left(\bar{\mathbf{a}}_k^{(t)}\right)^H \boldsymbol{\eta}_{m,k} \boldsymbol{\eta}_{m,k}^H \bar{\mathbf{a}}_k  }{\left| \boldsymbol{\eta}_{m,k}^H \bar{\mathbf{a}}_k^{(t)} \right|} \right\},
\end{align}
where $\bar{\mathbf{a}}_k^{(t)}$ is the solution of $\bar{\mathbf{a}}_k$ obtained in the $t$-th iteration of the MM algorithm. According to the Cauchy-Schwarz inequality, we have 
\vspace{-0.2cm}
\begin{equation}
    f \left( \bar{\mathbf{a}}_k | \bar{\mathbf{a}}_k^{(t)} \right) \le \sum_{m=1}^M \left|\boldsymbol{\eta}_{m,k}^H \bar{\mathbf{a}}_k\right|,
    \vspace{-0.1cm}
\end{equation}
where the equality is achieved at $\bar{\mathbf{a}}_k^{(t)} = \bar{\mathbf{a}}_k$. With the above surrogate function $f \left( \bar{\mathbf{a}}_k | \bar{\mathbf{a}}_k^{(t)} \right)$, problem \eqref{HTS_opt_4} can be solved by the following process \cite{sun2016majorization}:
\begin{equation}
    \bar{\mathbf{a}}_k^{(t+1)} = \argmax_{|\bar{\mathbf{a}}_{k}(n)| = 1, \forall n} f \left( \bar{\mathbf{a}}_k | \bar{\mathbf{a}}_k^{(t)} \right) = e^{j \angle \left(\mathbf{q}_k^{(t)}  \right)},
\end{equation}
where
\begin{equation}
    \mathbf{q}_k^{(t)} = \sum_{m=1}^M \frac{\boldsymbol{\eta}_{m,k} \boldsymbol{\eta}_{m,k}^H \bar{\mathbf{a}}_k^{(t)}}{ \left| \boldsymbol{\eta}_{m,k}^H \bar{\mathbf{a}}_k^{(t)} \right| }.
\end{equation}
After obtaining $\bar{\mathbf{a}}_k$ through the iterative MM algorithm, the PS-based analog beamformer can be designed as
$\mathbf{a}_{k,l} = \bar{\mathbf{a}}_k\big( (l-1) \frac{N}{N_{\mathrm{T}}} + 1 : l \frac{N}{N_{\mathrm{T}}} \big), \forall l$.

When $\mathbf{a}_{k,l}$ is fixed, problem \eqref{HTS_opt_3} reduces to
\begin{subequations} \label{HTS_opt_5}
    \begin{align}
        \max_{t_{k,l}}  \quad & \sum_{m=1}^M \left| \sum_{l=1}^{N_{\mathrm{T}}} \gamma_{m,k,l}  e^{-j 2\pi f_m  t_{k,l} } \right| \\
        \mathrm{s.t.} \quad & t_{k,l} \in [0, t_{\max}], \forall l,
    \end{align}
\end{subequations}
where $\gamma_{m,k,l} = \boldsymbol{\varphi}_{m,l}^T(\theta_k, r_k) \mathbf{a}_{k,l}$. Similar to problem \eqref{HTS_opt_2}, this problem can also be solved through the iterative one-dimensional search. Then, the solution to problem \eqref{HTS_opt_3} can be obtained by updating $\mathbf{a}_{k,l}$ and $t_{k,l}$ alternately based on the above methods. The complexity for updating $\mathbf{a}_{k,l}, \forall l,$ in each iteration of the MM algorithm is $\mathcal{O}(2 MN^2 + MN)$ and for updating $t_{k,l}, \forall l$, through the one-dimensional search is $\mathcal{O}(QMN_{\mathrm{T}})$. Compared to the PNF-based design, the robust design maximizes the array gain based on the exact channel model, resulting in a better performance but at the cost of incurring higher complexity.

\subsection{Digital Beamforming Optimization}
Given the analog beamformer designed in the previous subsection, the received signal in \eqref{eqn:received_signal} can be concisely reformulated as
\begin{equation}
    y_{m,k} = \tilde{\mathbf{h}}_{m,k}^H \mathbf{D}_m \mathbf{s}_m + n_{m,k},
\end{equation}
where $\tilde{\mathbf{h}}_{m,k} = \mathbf{T}_m^H \mathbf{A}^H \mathbf{h}_{m,k}$ represents the equivalent channel. This equivalent channel, which has a low dimensionality of $N_{\mathrm{RF}}$, can be efficiently estimated using standard channel estimation methods with low complexity. The digital beamformer can then be optimized to maximize the spectral efficiency of the equivalent channel, resulting in the following problem:
\begin{subequations} \label{problem_HTS_D}
    \begin{align}
        \max_{\mathbf{D}_m}  \quad &\sum_{m=1}^M \sum_{k=1}^K \breve{R}_{m,k}(\mathbf{D}_m) \\
        \mathrm{s.t.} \quad & \|\mathbf{A} \mathbf{T}_m \mathbf{D}_m \|_F^2 \le P_t, \forall m,
    \end{align}
\end{subequations}
where
\begin{equation}
    \breve{R}_{m,k} = \log_2 \left( 1 + \frac{| \tilde{\mathbf{h}}_{m,k}^H \mathbf{d}_{m,k} |^2}{\sum_{i \in \mathcal{K}, i \neq k} | \tilde{\mathbf{h}}_{m,k}^H \mathbf{d}_{m,i} |^2 + \sigma_{m,k}^2} \right).
\end{equation}
This problem is a classical sum-rate maximization, which can be effectively solved by the existing method, such as fractional programming \cite{shen2018fractional2}, MM \cite{sun2016majorization}, and weighted minimum mean-squared error \cite{christensen2008weighted} methods. Thus, we omit the details here. Due to the low dimensionality of $\mathbf{D}_m$, the corresponding algorithm has low complexity. Based on the above discussion, the overall HTS approach is summarized in \textbf{Algorithm \ref{alg_heuristic}}. 

For sub-connected architectures, the HTS approach follows a similar procedure as in \textbf{Algorithm \ref{alg_heuristic}}. The difference is that the design of the analog beamformer needs to be based on the array response of the sub-arrays connected to each RF chain.

\begin{algorithm}[tb]
    \caption{Low-Complexity HTS Approach for Fully-connected Architectures.}
    \label{alg_heuristic}
    \begin{algorithmic}[1]
        \STATE{\textbf{Stage I: Heuristic analog beamforming design}}
        \STATE{estimate the user location information $\theta_k$ and $r_k, \forall k$.}
        \FOR{ $n \in \{1,\dots,N_{\mathrm{RF}}\}$  }
        \STATE{If $n \le K$, design $\mathbf{a}_{n,l}$ and $t_{n,l}, \forall l$ based on the location of user $n$, i.e., $\theta_n$ and $r_n$, using the PNF-based method or the robust method. Otherwise, design them based on the location of a randomly selected user.}
        \ENDFOR
        \STATE{\textbf{Stage II: Digital beamforming optimization}}
        \STATE{estimate the equivalent channel $\tilde{\mathbf{h}}_{m,k}, \forall m,k$.}
        \STATE{optimize the digital beamformer $\mathbf{D}_m, \forall m,$ by solving \eqref{problem_HTS_D}.}
    \end{algorithmic}
\end{algorithm}

\section{Numerical Results} \label{sec:result}
In this section, numerical results obtained through Monte Carlo simulations are provided to verify the effectiveness of the proposed FAD and HTS approaches. The following simulation setup is exploited throughout our simulations unless otherwise specified. It is assumed that the BS is equipped with $N=512$ antennas with half-wavelength spacing,  $N_\mathrm{RF} = 4$ RF chains, and $N_{\mathrm{T}} = 16$ TTDs for each RF chain. The maximum time delay of each TTD is set to $t_{\max} = N/(2f_c) = 2.56$ nanosecond (ns) \cite{dai2022delay}. The central frequency, system bandwidth, number of subcarriers, and length of the cyclic prefix are assumed to be $f_c = 100$ GHz, $B = 10$ GHz, $M = 10$, and $L_{\mathrm{CP}} = 4$, respectively. There are $K = 4$ near-field communication users randomly located in close proximity to the BS from $5$ m to $15$ m. The number of NLoS channels between the BS and each user is assumed to be $L_k = 4, \forall k$. The average reflection coefficient of NLoS channels is set to $-15$ dB. The noise power density is set to $-174$ dBm/Hz. 

For the proposed algorithms, the convergence threshold is set to $10^{-3}$. The penalty factors are set to $\rho = \bar{\rho} = 10^3$, and their reduction factor is set to $c = 0.5$. For the one-dimensional search, we set $Q = 10^3$. Furthermore, “HTS-PNF” represents the HTS approach based on the PNF-based analog beamforming design, while “HTS-Robust” refers to the HTS approach using the robust design. Since the performance of the FDA approach might be sensitive to the initialized parameters, the optimization variables of these two algorithms are initialized based on the HTS-PNF approach to guarantee a good performance. All following results are obtained by averaging over $100$ random channel realizations. For performance comparison, we mainly consider the following two benchmark schemes: 1) \textbf{Optimal Digital beamforming (BF):} In this benchmark, each antenna is connected to a dedicated RF chain, which realizes the full-dimensional frequency-dependent beamforming. This benchmark provides an upper-bound performance. The fully-digital beamformers can be optimized using the methods in \cite{shen2018fractional2, sun2016majorization, christensen2008weighted}; 2) \textbf{Conventional BF:} In this benchmark, the analog beamforming is achieved by only frequency-independent PSs. The corresponding beamforming optimization problem can be solved through the proposed penalty-FDA algorithm by setting $t_{n,l} = 0, \forall n, l$.

\begin{figure}[t!]
    \centering
    \includegraphics[width=0.23\textwidth]{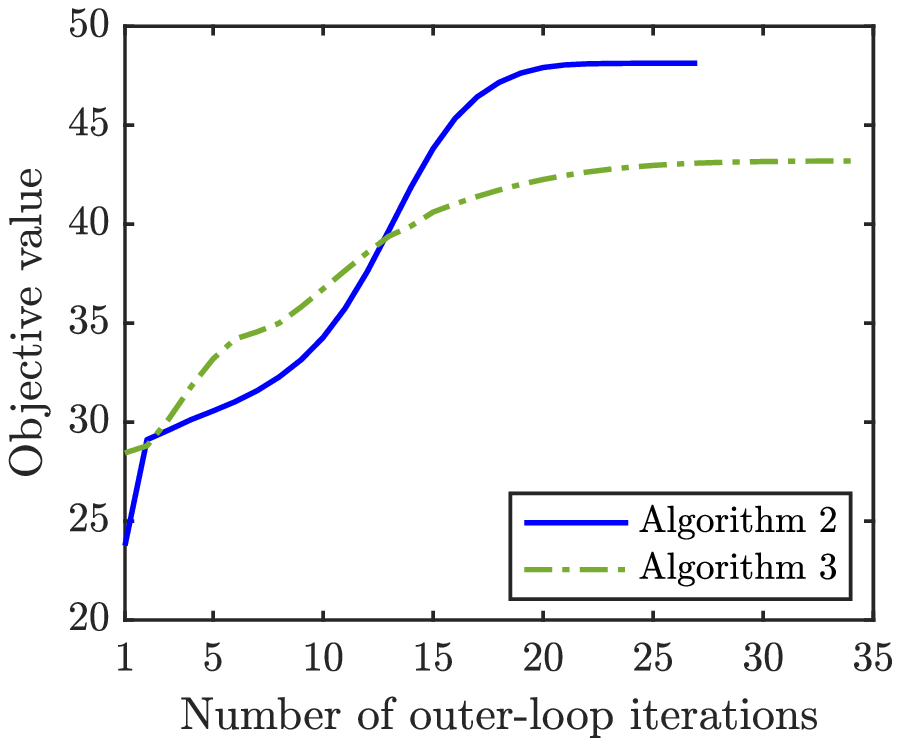}
    \includegraphics[width=0.23\textwidth]{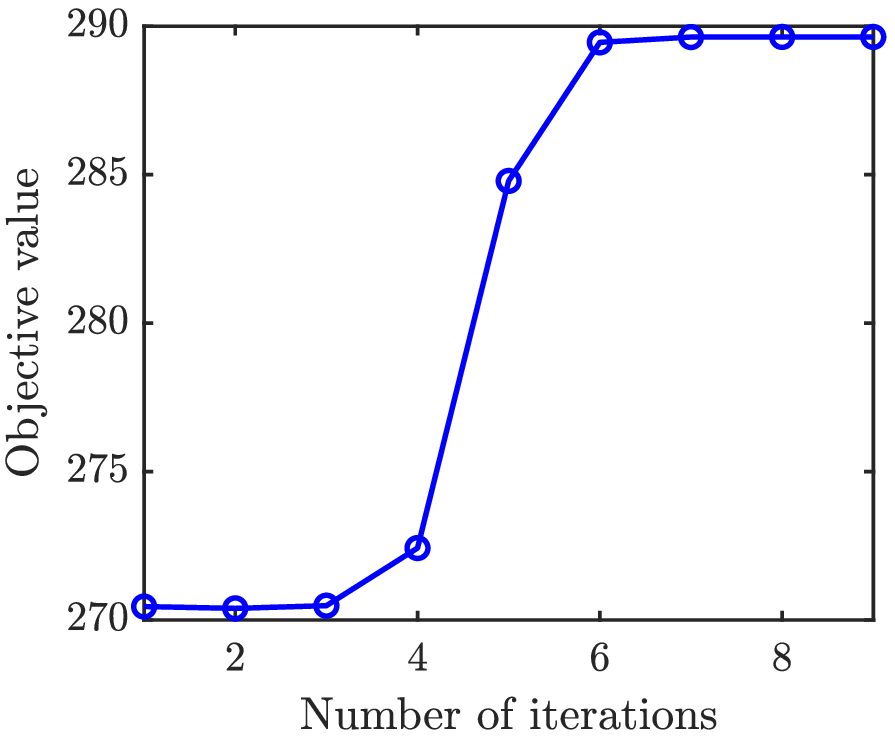}
    \caption{Convergence behavior of the proposed FDA approaches (left) and robust design for analog beamforming optimization (right).}
    \label{fig_convergence}
    \vspace{-0.1cm}
\end{figure}

\subsection{Convergence and Complexity of Proposed Approaches}
Fig. \ref{fig_convergence} demonstrates the convergence behavior of the proposed iterative algorithms. To quantitatively analyze the complexity, we examine the CPU time of each approach using MATLAB R2024a on an Apple M3 silicon chip. For the fully-connected architecture, the CPU times are 37.91 s for FDA, 0.53 s for HTS-Robust, and 0.35 s for HTS-PNF. For the sub-connected architecture, the CPU times are 18.52 s, 0.42 s, and 0.37 s, respectively. The HTS approach significantly reduces computational complexity by avoiding joint optimization of analog and digital beamformers. Additionally, the FDA approach for sub-connected architecture consumes less CPU time than the fully-connected architecture due to the simplified optimization process enabled by the spatial structure of the sub-connected analog beamformers.

\vspace{-0.3cm}
\subsection{Evaluation of Proposed Heuristic Analog Beamforming}
In Fig. \ref{fig:array_gain_bandwidth}, we evaluate the normalized array gain achieved by the proposed heuristic analog beamforming method across different frequencies and bandwidths. Here, we consider the fully-connected architecture and a single user located at $(45^\circ, 10 \text{ m})$ as an example. We compare our approach against the following wideband analog beamforming designs for conventional BF architecture:  1) Mean channel covariance matrix (MCCM) method \cite{chen2020hybrid}, which designs the analog beamformer based on the averaged covariance matrix of array response vectors over all subcarriers; 2) Mean channel matrix (MCM) method \cite{chen2020hybrid}, which designs the analog beamformer based on the averaged array response vector over all subcarriers; and 3) Central frequency (CF) method, where the design is based on the array response at the central frequency. It can be observed that the proposed heuristic method employing TTDs significantly enhances the array gain throughout the entire frequency band. However, with bandwidth expansion from $10$ GHz to $30$ GHz, the proposed PNF-based method maintains high array gain only over a limited range of the frequency band. In contrast, the robust design consistently achieves high array gain over the entire band, confirming its robustness.

\begin{figure}[t!]
    \centering
    \includegraphics[width=0.4\textwidth]{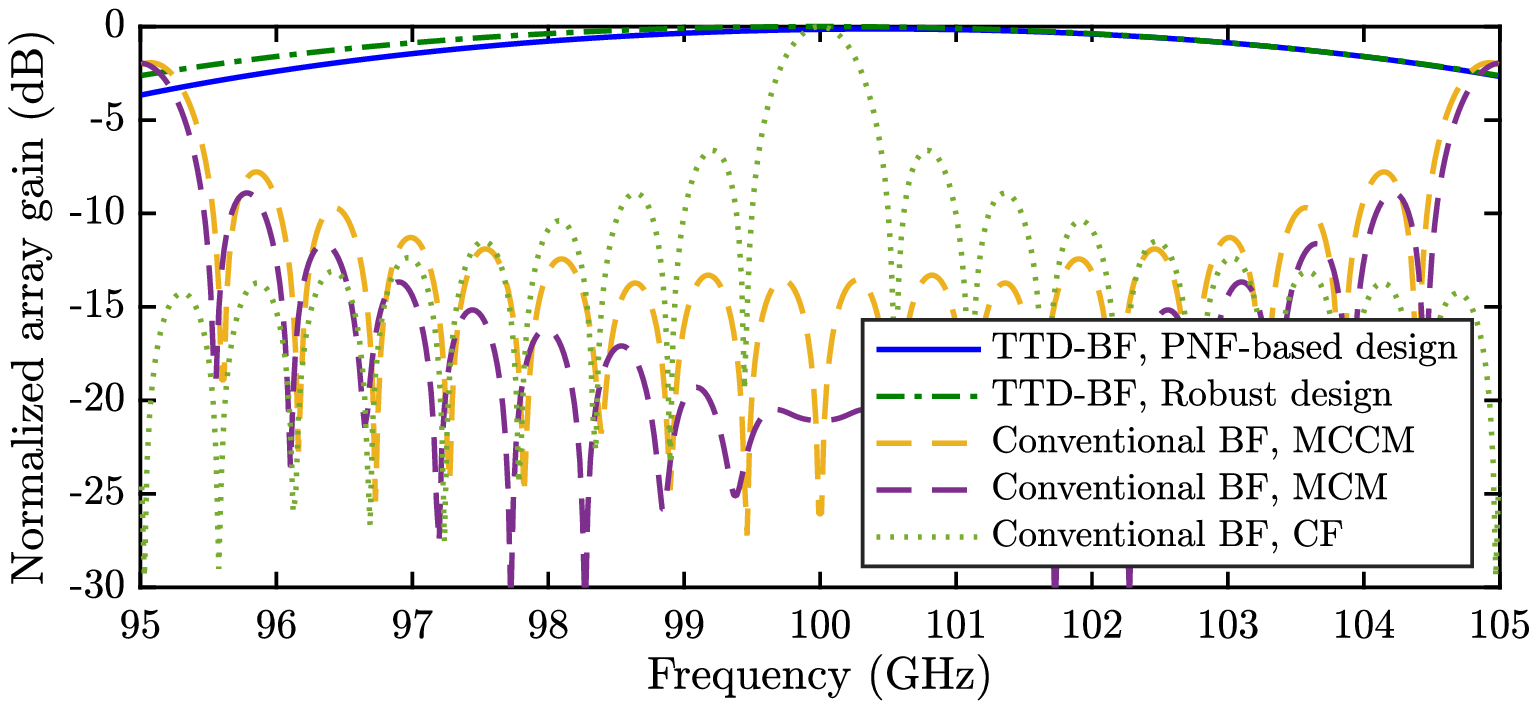}
    \includegraphics[width=0.4\textwidth]{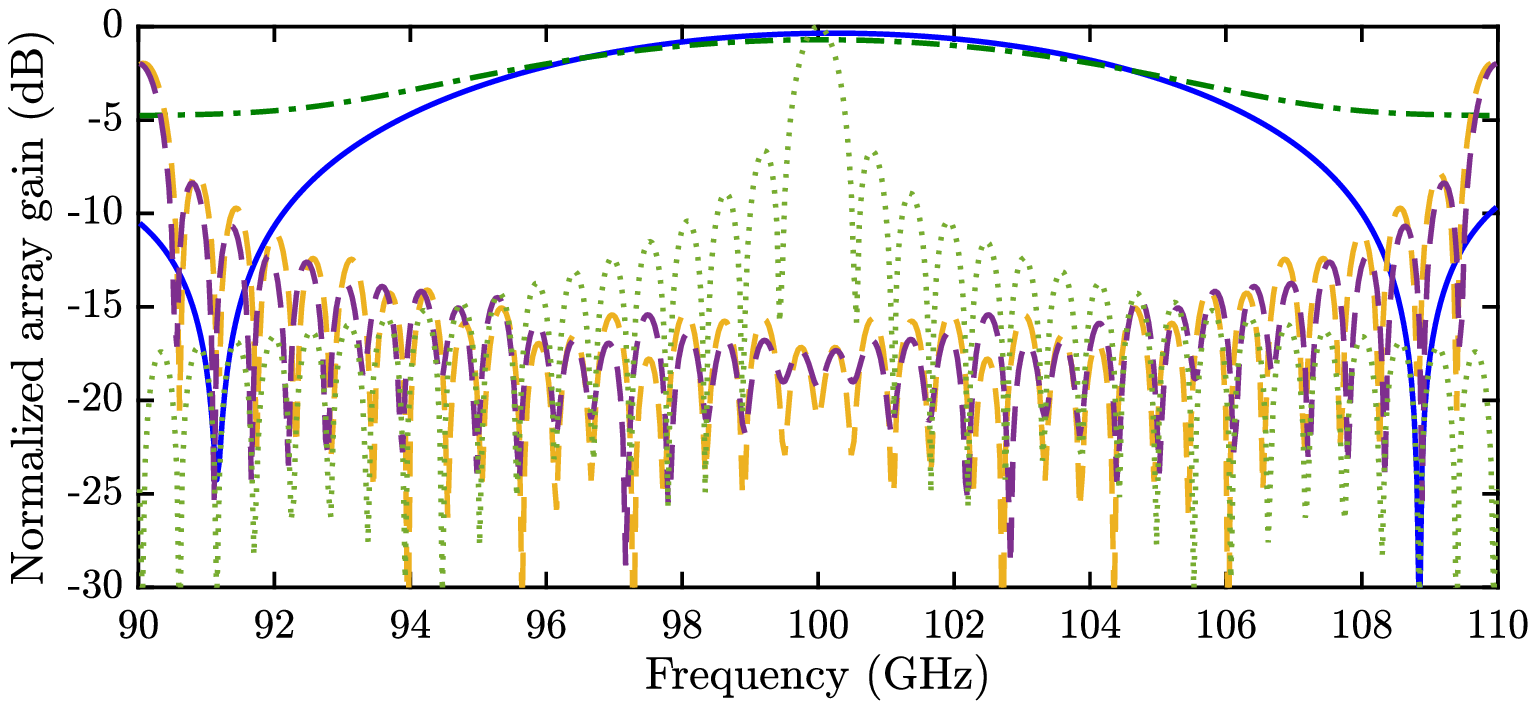}
    \includegraphics[width=0.4\textwidth]{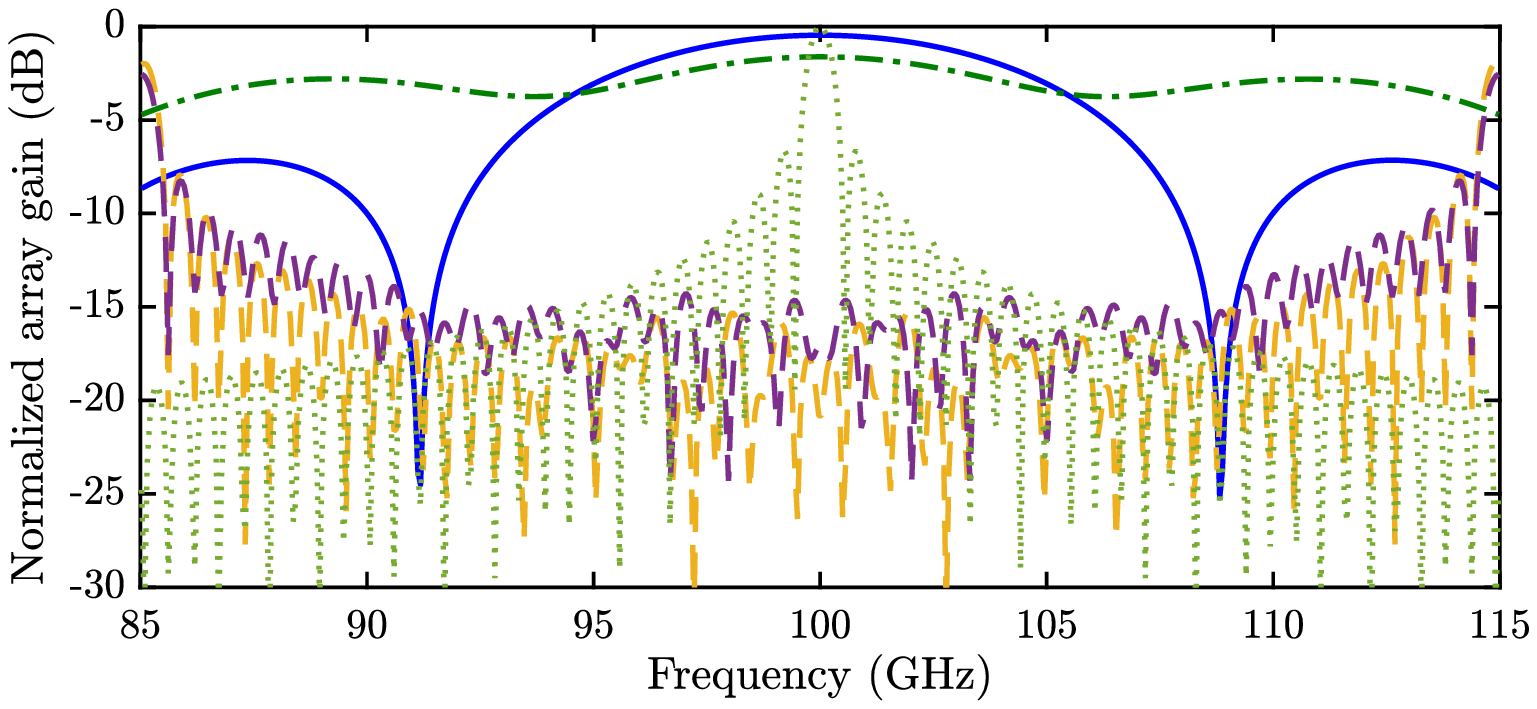}
    \caption{Normalized array gain across the frequency band with $10$ GHz (top), $20$ GHz (middle), and $30$ GHz (bottom) bandwidths.}
    \label{fig:array_gain_bandwidth}
\end{figure} 

In Fig. \ref{fig:array_gain}, we further evaluate the proposed heuristic analog beamforming method with different maximum time delays. In addition to the above benchmarks, we also consider two existing wideband analog beamforming methods using TTD for performance comparison, namely the near-field phase-delayed focusing (PDF) method \cite{cui2021near} and the far-field delayed phase precoding (DPP) method \cite{gao2021wideband, dai2022delay}. The near-field PDF scheme is based on a piecewise-far-field approximation, which approximates the array response of each sub-array as a far-field channel, while the far-field DPP scheme approximates the array response of the whole array through a far-field channel. It can be seen from Fig. \ref{fig:array_gain} that the proposed method achieves significant performance gains compared to the existing method due to the use of a more accurate approximation and the optimization of the maximum delay constraint.

\vspace{-0.3cm}
\subsection{Spectral Efficiency Versus Transmit Power}
In Fig. \ref{fig:rate_vs_power}, we evaluate the spectral efficiency achieved by the different schemes with respect to the transmit power $P_t$. It can be seen that the fully-connected TTD-BF architecture performs close to the optimal digital BF architecture by employing a limited number of TTDs. For the subconnected TTD-BF architecture, although only one sub-array per RF chain can be used to transmit the signal, the resulting performance loss is not significant. In addition, the FDA approach is the most efficient approach in both fully-connected and sub-connected architectures due to its ability to achieve a stationary point solution. Although only partial CSI is used to heuristically design the analog beamformer, the performance loss of the HTS approaches is negligible compared to the FDA approach. On the contrary, conventional BF architectures utilizing only frequency-dependent PSs suffer from significant performance degradation due to significant spatial wideband effects.

\begin{figure}[t!]
    \centering
    \includegraphics[width=0.4\textwidth]{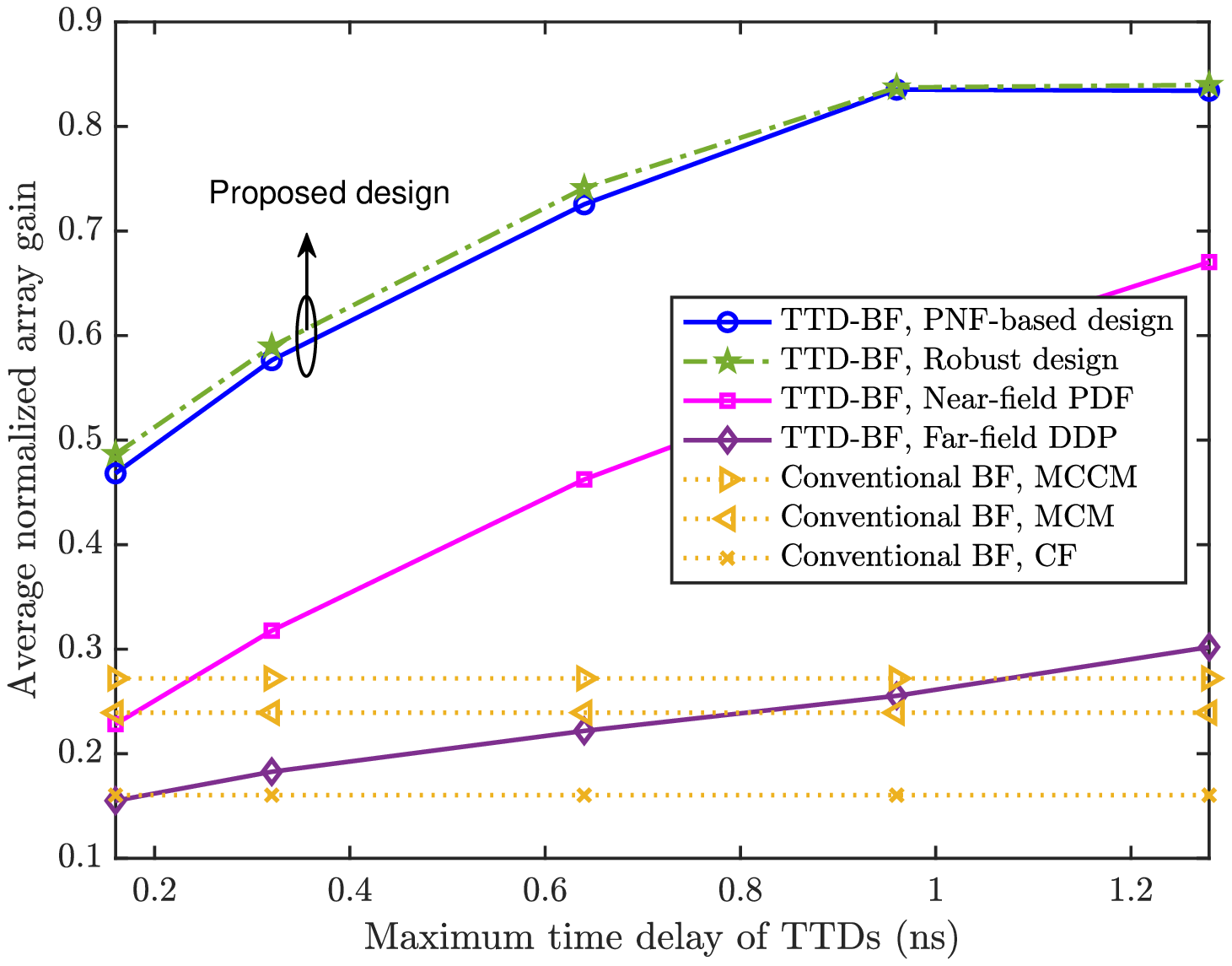}
    \caption{Average array gain versus maximum time delay, $t_{\max}$. }
    \label{fig:array_gain}
    \includegraphics[width=0.4\textwidth]{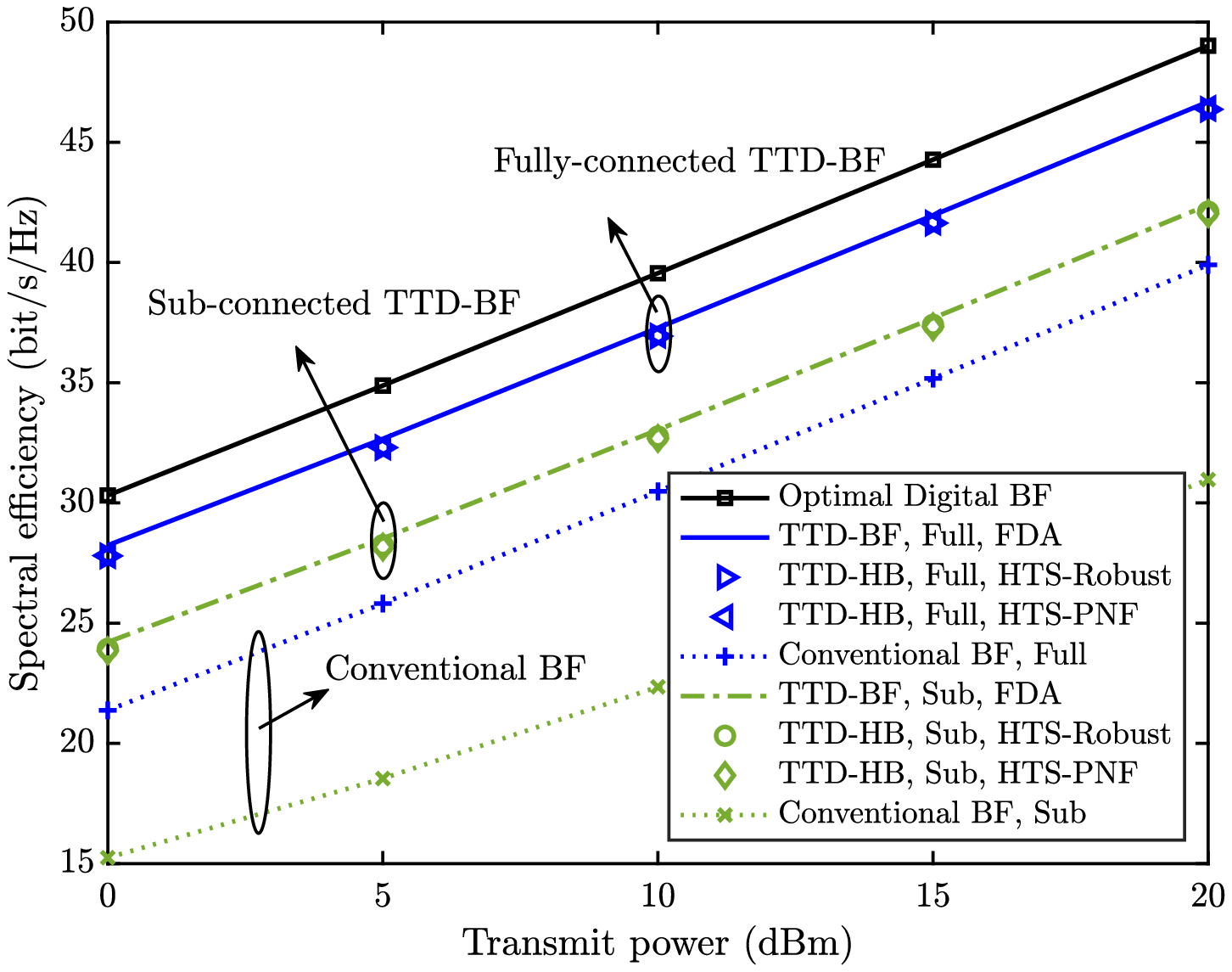}
    \caption{Spectral efficiency versus transmit power, $P_t$.}
    \label{fig:rate_vs_power}
\end{figure}

\vspace{-0.3cm}
\subsection{Spectral and Energy Efficiency Versus Number of TTDs}
In Fig. \ref{fig:number_TTD}, we investigate the effect of the number of TTDs on the spectral efficiency of the considered communication system. The results show that as the number of TTDs increases, the performance of the fully-connected architecture gradually converges towards the optimal digital BF performance. This improvement is attributed to the enhanced mitigation of spatial wideband effects by installing more TTDs. However, this improvement is not evident in the sub-connected architecture. In other words, the sub-connect architecture requires a much smaller number of TTDs to reach its optimal performance due to the weaker spatial wideband effect of the small-aperture sub-arrays connected to each RF chain. This result is consistent with the analysis in \textbf{Remark \ref{remark_1}}. Furthermore, an intriguing phenomenon is that the HTS-PNF approach in the fully-connected architecture experiences a significant performance degradation when the number of TTDs is limited (i.e., $N_T \le 8$). This is because the PNF approximation becomes invalid in this case, see \textbf{Proposition \ref{proposition_1}}. In this case, the HTS-Robust approach should be adopted to guarantee the performance. On the contrary, the performance of the HTS-PNF approach is consistently comparable to that of the FDA approach in the sub-connected architecture. This is also expected since the PNF approximation is only needed to approximate the array response of small sub-arrays, which requires fewer TTDs to ensure high accuracy.

In Fig. \ref{fig:number_TTD_EE}, we further demonstrate the energy efficiency versus the number of TTDs. Energy efficiency is defined as the ratio between spectral efficiency and power consumption. In particular, let $P_{\mathrm{BB}} = 300$ mW, $P_{\mathrm{RF}} = 200$ mW, $P_{\mathrm{PS}} = 30$ mW, and $P_{\mathrm{TTD}} = 100$ mW denote the power consumption of baseband processing, a RF chain, a PS, and a TTD, respectively \cite{dai2022delay}. Then, the power consumption of optimal digital BF architectures is given by $P_{\mathrm{FD}} = P_t + P_{\mathrm{BB}} + N P_{\mathrm{RF}}$. The power consumption of the fully-connected and sub-connected TTD-BF architectures is given by $P_{\mathrm{F \textendash TTD}} = P_t + P_{\mathrm{BB}} + N_{\mathrm{RF}} P_{\mathrm{RF}} + N N_{\mathrm{RF}} P_{\mathrm{PS}} + N N_{\mathrm{T}} P_{\mathrm{TTD}}$ and $P_{\mathrm{S \textendash TTD}} = P_t + P_{\mathrm{BB}} + N_{\mathrm{RF}} P_{\mathrm{RF}} + N P_{\mathrm{PS}} + N N_{\mathrm{T}} P_{\mathrm{TTD}}$, respectively. Similarly, the power consumption of conventional BF architectures can be calculated by excluding the power consumption attributed to TTDs. As can be observed from Fig. \ref{fig:number_TTD_EE}, the sub-connected architecture is much more energy-efficient than the fully-connected architecture. Furthermore, it is evident that the energy efficiency of the fully-connected architecture has peak points at $N_{\mathrm{T}} = 8$ and $16$. This is because, beyond this point, the additional increase in the number of TTDs leads to marginal performance improvements in spectral efficiency, as depicted in Fig. \ref{fig:number_TTD}, while incurring a substantial power consumption. Conversely, the energy efficiency of the sub-connected architecture exhibits a continuous decline from the outset, as the installation of additional TTDs fails to provide significant improvements in spectral efficiency.

\begin{figure}[t!]
    \centering
    \includegraphics[width=0.4\textwidth]{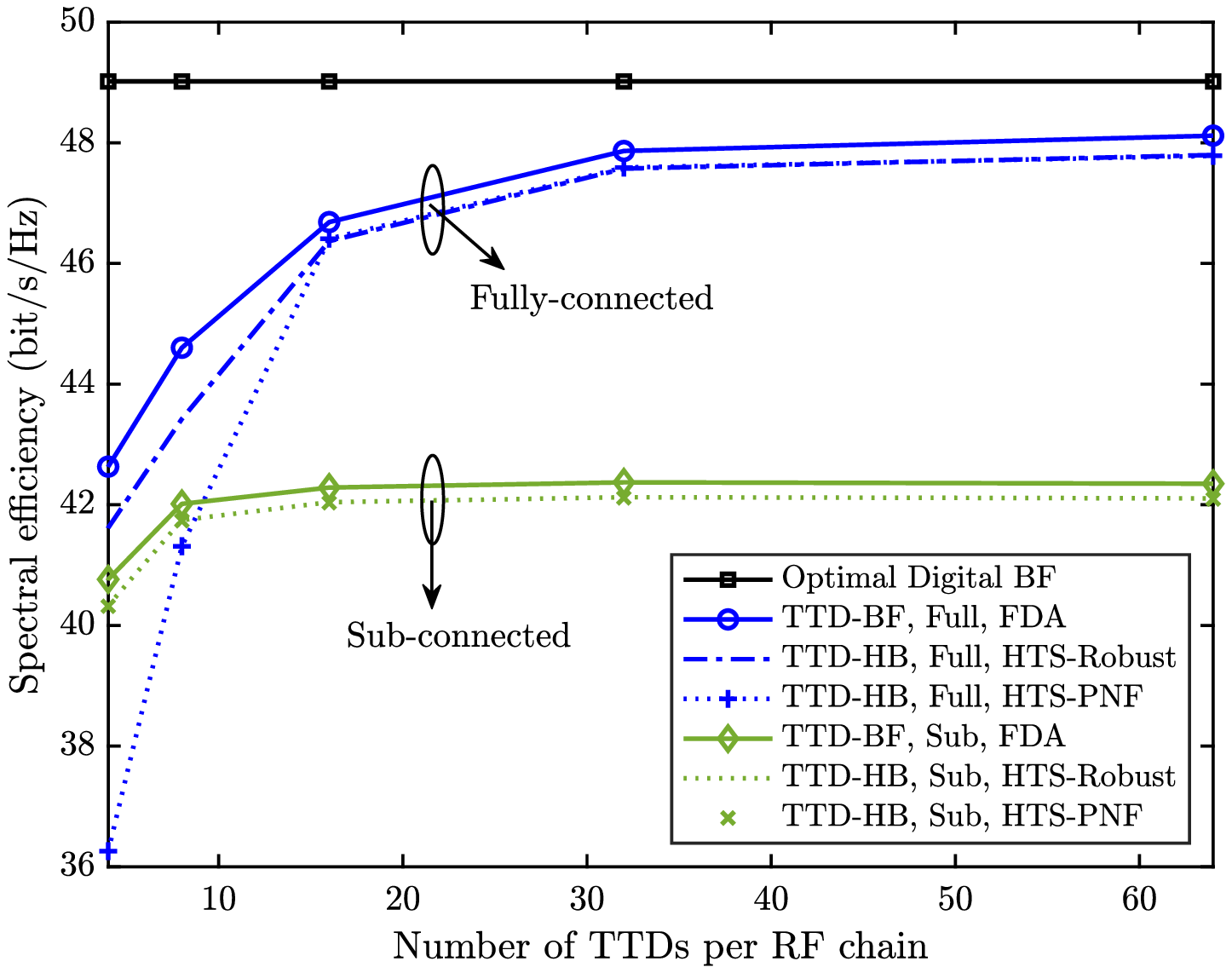}
    \caption{Spectral efficiency versus number of TTDs, $N_{\mathrm{T}}$.}
    \label{fig:number_TTD}
    \includegraphics[width=0.4\textwidth]{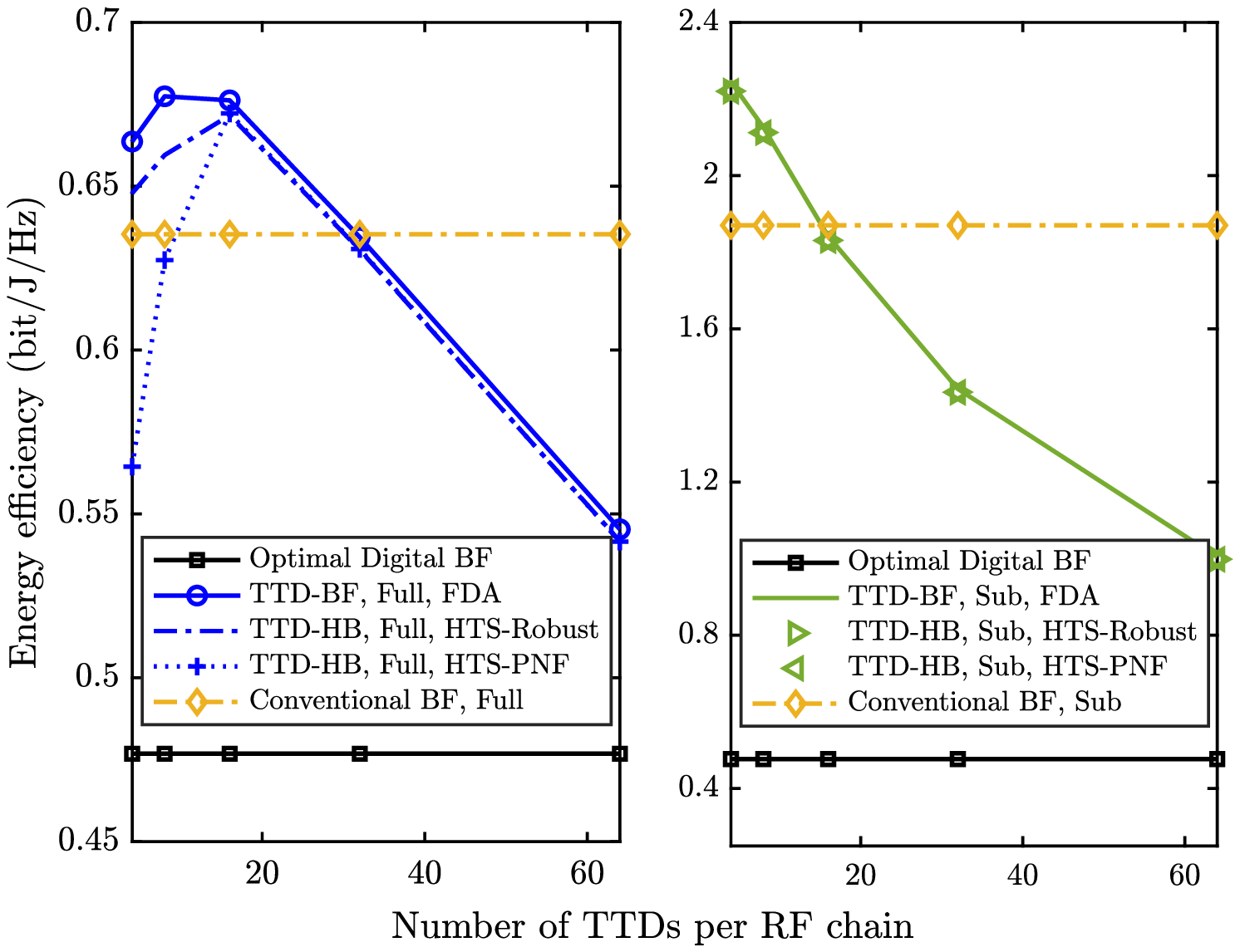}
    \caption{Energy efficiency versus number of TTDs, $N_{\mathrm{T}}$.}
    \label{fig:number_TTD_EE}
\end{figure}

\vspace{-0.3cm}
\subsection{Spectral Efficiency Versus Maximum Delay of TTDs}
In Fig. \ref{fig:max_delay}, the impact of the maximum delay $t_{\max}$ of TTD on the spectral efficiency performance is investigated. The results show that the performance of fully connected architectures is heavily influenced by the value of $t_{\max}$.  Specifically, a minimum of $0.96$ ns of $t_{\max}$ is required to achieve optimal performance. However, for the sub-connected architecture, the minimum $t_{\max}$ to achieve near-optimal performance is less than $0.32$ ns. This phenomenon can be explained as follows. In the sub-connected architecture, the propagation delay difference within each sub-array is significantly smaller than the propagation delay difference of the entire array. Therefore, in order to mitigate the spatial wideband effect in each subarray, the TTD only needs to introduce a small time delay to compensate for the propagation delay difference.

\begin{figure}[t!]
    \centering
    \includegraphics[width=0.4\textwidth]{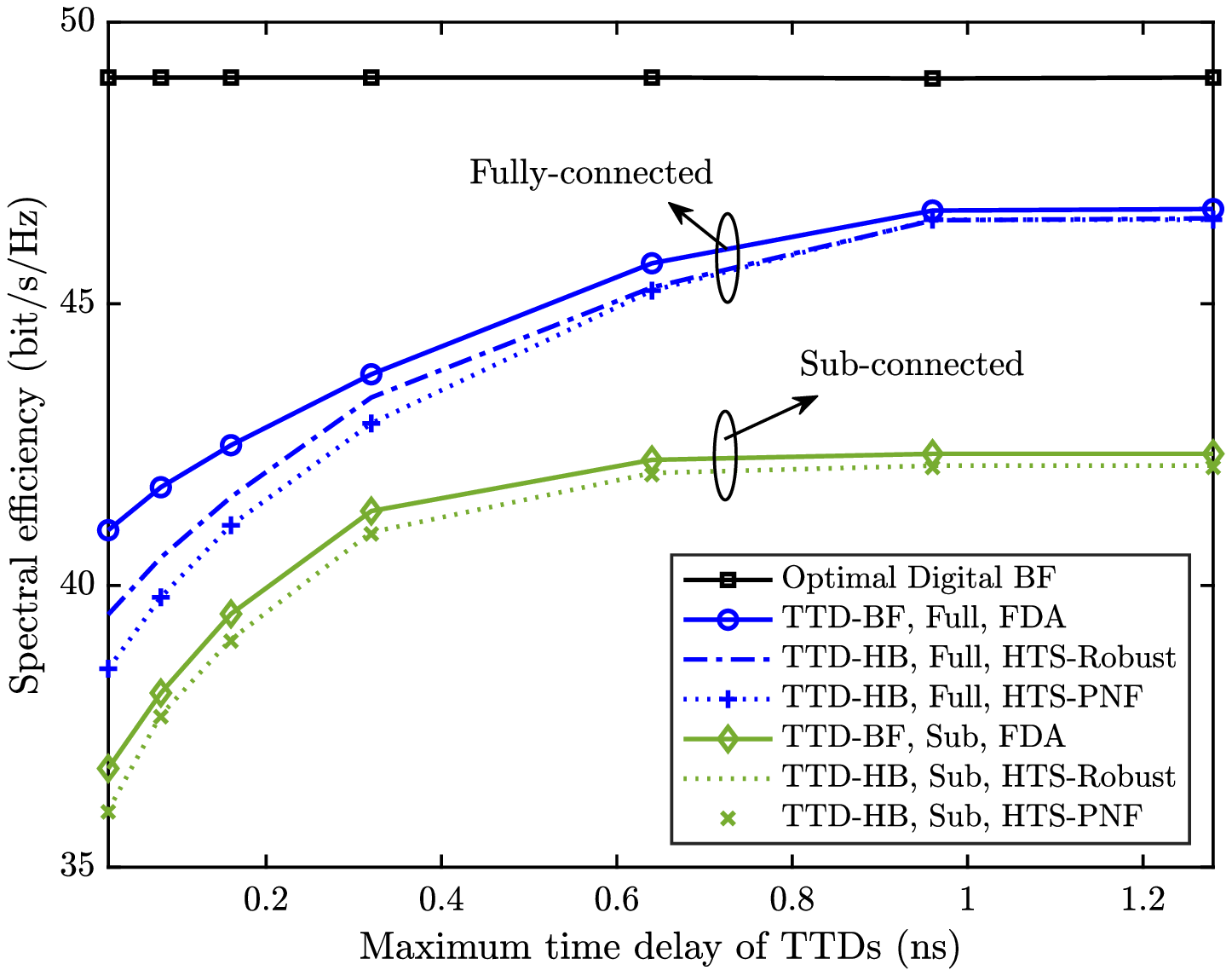}
    \caption{Spectral efficiency versus maximum time delay, $t_{\max}$.}
    \label{fig:max_delay}
    \includegraphics[width=0.4\textwidth]{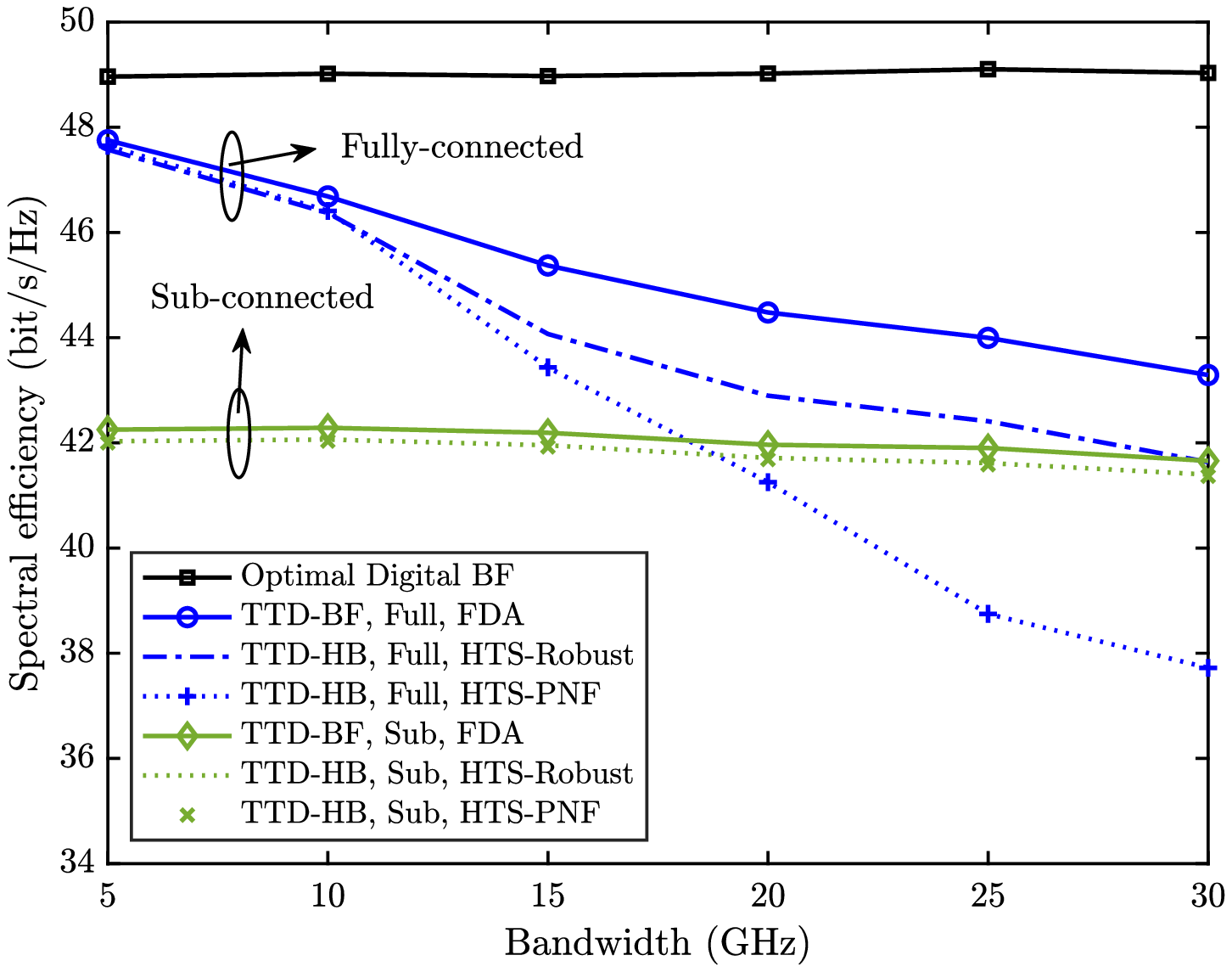}
    \caption{Spectral efficiency versus bandwidth, $B$.}
    \label{fig:bandwidth}
\end{figure} 

\vspace{-0.2cm}
\subsection{Spectral Efficiency Versus Bandwidth}
In Fig. \ref{fig:bandwidth}, we study how the system bandwidth $B$ impacts spectral efficiency. We keep the transmit power $P_t$ adjusted with the bandwidth so that the transmit signal-to-noise ratio $P_t/\sigma_{m,k}^2$ remains consistent. It can be seen that as bandwidth increases, the performance of the fully-connected architecture declines significantly due to the more significant spatial wideband effect, requiring more TTDs to counter it effectively. In contrast, the sub-connected architecture shows negligible performance loss with increasing bandwidth because its smaller sub-array apertures result in a less severe spatial wideband effect. Furthermore, the HTS-PNF approach in fully-connected architectures experiences notable performance loss as bandwidth increases, which is primarily due to the reduced accuracy of the piecewise near-field approximation, as detailed in \textbf{Proposition \ref{proposition_1}}. Compared to the HTS-PNF approach, the HTS-Robust performs more robustly against the increased bandwidth in the fully-connected architectures. Conversely, in sub-connected architectures, the smaller size of the sub-arrays diminishes the impact of increased bandwidths, thereby enhancing the efficiency of the HTS-PNF approach. 

\vspace{-0.2cm}
\section{Conclusion} \label{sec:conclusion}



This study explored beamforming optimization in near-field wideband multi-user communication systems to counteract spatial wideband effects. We introduced a new sub-connected architecture for TTD-based hybrid beamforming and proposed two wideband beamforming methods, FDA and low-complexity HTS, applicable to both fully-connected and sub-connected architectures. Numerical results indicated that while the sub-connected architecture sacrifices some spectral efficiency, it significantly enhances energy efficiency and robustness against increases in system bandwidth and reductions in TTD delays.


\begin{appendices}

\vspace{-0.5cm}
\section{Proof of Proposition \ref{proposition_1}} \label{proposition_1_proof}

\begin{figure}[t!]
    \centering
    \includegraphics[width=0.4\textwidth]{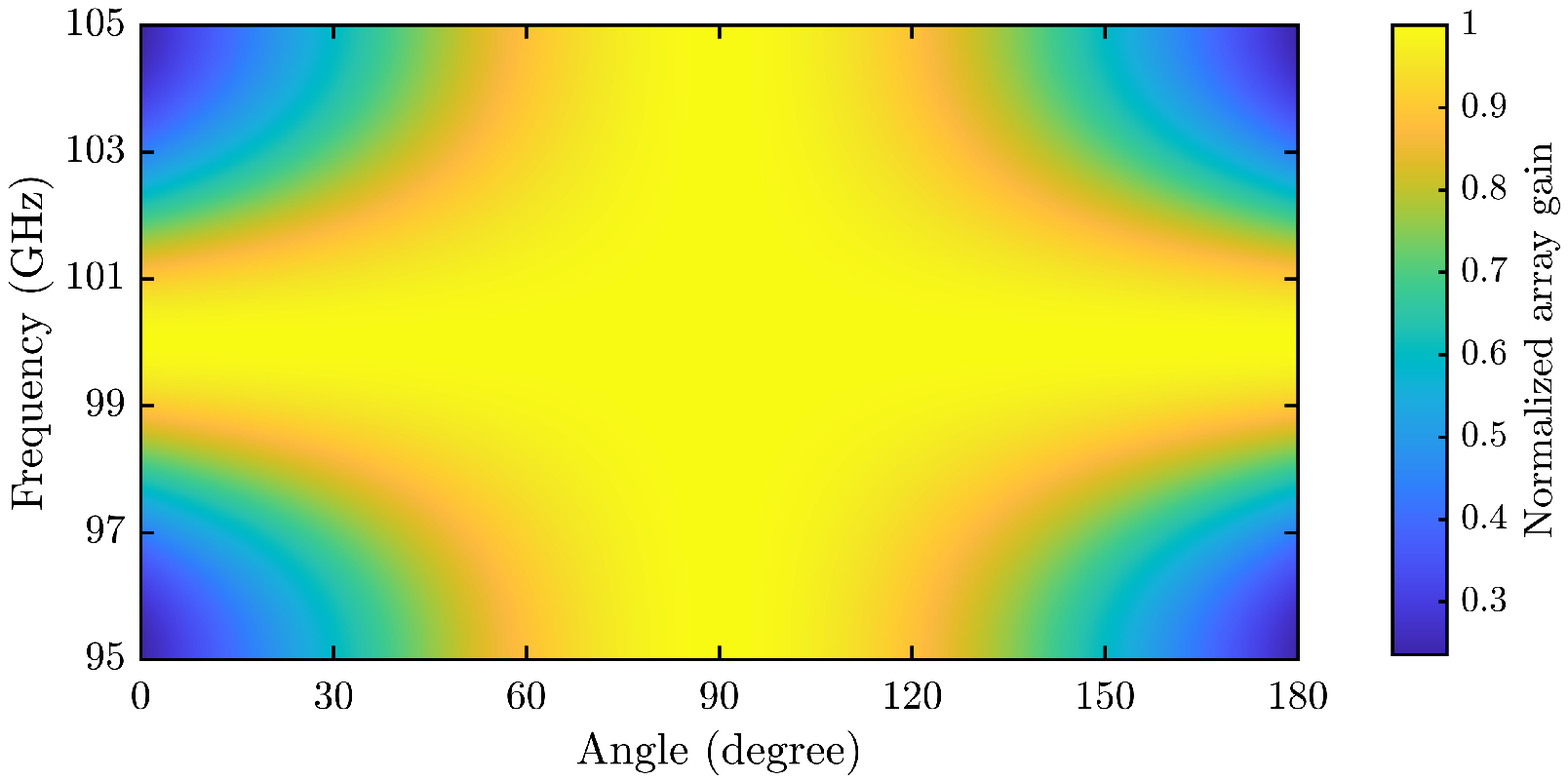}
    \caption{Illustration of the normalized array gain at different frequencies and angles. Here, we set $N = 512$, $N_{\mathrm{T}} = 16$, $f_c = 100$ GHz, $r = 8$ m, and $B = 10$ GHz.}
    \label{fig:approx_error}
\end{figure}

By defining $\tilde{N} = N/N_{\mathrm{T}}$, the normalized array gain can be calculated by 
\vspace{-0.2cm}
\begin{equation}
    G(f_m, \theta, r) = \frac{1}{N} \left|  \sum_{l=1}^{N_{\mathrm{T}}} \sum_{q = 1}^{\tilde{N}} e^{j \frac{2 \pi (f_m - f_c)}{c} (\tilde{\nu}_{l,q} - \nu_l) } \right|.
\end{equation}
Based on numerous simulations, we find that, typically, the minimum value of $G(f_m, \theta, r)$ is achieved when $\theta$ is either 0 or $\pi$ and $f_m$ is either $f_1$ or $f_M$, as illustrated in Fig. \ref{fig:approx_error}, except in cases where the minimum value approaches very close to zero. Given that the threshold $\Delta$ is slightly smaller than one, we focus on the cases of $\theta = 0$ or $\pi$ and $f_m = f_1$ or $f_M$. Let us take the case when $\theta = 0$ and $f_m = f_M$ as an example. In this case, it can be readily obtained that
\vspace{-0.2cm}
\begin{equation}
    \vartheta_l = 0, \quad \tilde{\nu}_{l,q} = \nu_l - \tilde{\chi}_q d, \quad f_M - f_c \approx \frac{B}{2},  \quad  \forall l, q.
    \vspace{-0.2cm}
\end{equation}
Thus, by defining $\eta = d/c = 1/(2 f_c)$, the normalized array when $\theta=0$ at frequency $f_M$ can be simplified as 
\begin{align}
    &G(f_M, 0, r) \approx \frac{1}{N} \left| \sum_{l=1}^{N_{\mathrm{T}}} \sum_{q = 1}^{\tilde{N}} e^{-j \pi \eta B \tilde{\chi}_q } \right| \nonumber \\ 
    &= \left| \frac{N_{\mathrm{T}}}{N} \sum_{q = -\frac{\tilde{N}-1}{2}}^{\frac{\tilde{N}-1}{2}} e^{-j \pi \eta B q} \right|
    = \left| \frac{N_{\mathrm{T}}}{N} \frac{\sin \left( \frac{\pi B N}{4 f_c N_{\mathrm{T}}}  \right)}{\sin \left(\frac{\pi B}{4 f_c} \right)} \right|,
\end{align}
where the last step stems from the definition of the Dirichlet kernel, i.e., $ \sum_{k=-n}^n e^{jkx} = \frac{\sin((n + 1/2)x)}{\sin(x/2)} $. We note that the above formula is unrelated to the distance $r$. Therefore, based on the above analysis, the inequality $\min_{f_m, \theta, r} G(f_m, \theta, r) \ge \Delta$ is approximately achieved if $G(f_M, 0, r) \ge 0$. The cases when $\theta = \pi$ and/or $f_m = f_1$ can be analyzed following the same procedure and have the same conclusion as the considered example. The proof is thus completed.    

\vspace{-0.3cm}
\section{Proof of Proposition \ref{proposition_2}} \label{proposition_2_proof}
For the objective function of problem \eqref{HTS_opt_2}, we get the following result from the triangle inequality:
\begin{align}
    &\left| \sum_{l=1}^{N_{\mathrm{T}}} e^{-j 2\pi f_m \left( \frac{\nu_l(\theta_k, r_k) - r_k}{c} + t_{k,l}  \right)} \right| 
    \le N_{\mathrm{T}},
\end{align}
where the maximum value is achieved when all summation terms have the same arbitrary phase. We set this arbitrary phase to $-2 \pi f_m \zeta_{k}$. The optimal $t_{k,l}$ must satisfy
\begin{align} \label{eq_79}
    & -2\pi f_m \left( \frac{\nu_l(\theta_k, r_k) - r_k}{c} + t_{k,l}  \right) = -2 \pi f_m \zeta_k \\
    \Leftrightarrow \quad & t_{k,l} = \zeta_k - \frac{\nu_l(\theta_k, r_k) - r_k}{c}.
\end{align}  
To ensure that $t_{k,l} \ge 0, \forall l$, the value of $\zeta_k$ need to be set to 
\begin{equation} \label{eq_80}
    \zeta_k = \frac{\nu_{\max,k} - r_k}{c}, 
\end{equation}  
where $\nu_{\max,k} = \max \{\nu_1(\theta_k, r_k),\dots, \nu_{N_{\mathrm{T}}}(\theta_k, r_k) \}$. Substituting \eqref{eq_80} into \eqref{eq_79} yields 
\begin{equation} \label{optimal_closed_form_t}
    t_{k,l} = \frac{\nu_{\max,k} - \nu_l(\theta_k, r_k)}{c}.
\end{equation} 
According to the geometric relationship illustrated in Fig. \ref{fig:channel_approx}, the largest possible value of $t_{k,l}$ given in the above solution is obtained when $\theta_k = 0$ or $\pi$, which is given by (c.f. \eqref{eq_50}) 
\begin{align}
    t_{k,l} \le &\frac{\nu_1(\pi, r_k) - \nu_{N_{\mathrm{T}}}(\pi, r_k)}{c}
    = \frac{N (N_{\mathrm{T}} - 1)d}{N_{\mathrm{T}} c}.
\end{align}
Therefore, if $t_{\max} \ge \frac{N (N_{\mathrm{T}} - 1)d}{N_{\mathrm{T}} c}$, the optimal solution in \eqref{optimal_closed_form_t} can always be achieved. The proof is thus completed.



\end{appendices}

\bibliographystyle{IEEEtran}
\bibliography{reference/mybib}

\begin{thebibliography}{10}
\providecommand{\url}[1]{#1}
\csname url@samestyle\endcsname
\providecommand{\newblock}{\relax}
\providecommand{\bibinfo}[2]{#2}
\providecommand{\BIBentrySTDinterwordspacing}{\spaceskip=0pt\relax}
\providecommand{\BIBentryALTinterwordstretchfactor}{4}
\providecommand{\BIBentryALTinterwordspacing}{\spaceskip=\fontdimen2\font plus
\BIBentryALTinterwordstretchfactor\fontdimen3\font minus \fontdimen4\font\relax}
\providecommand{\BIBforeignlanguage}[2]{{%
\expandafter\ifx\csname l@#1\endcsname\relax
\typeout{** WARNING: IEEEtran.bst: No hyphenation pattern has been}%
\typeout{** loaded for the language `#1'. Using the pattern for}%
\typeout{** the default language instead.}%
\else
\language=\csname l@#1\endcsname
\fi
#2}}
\providecommand{\BIBdecl}{\relax}
\BIBdecl

\bibitem{conference_version}
{Z. Wang \emph{et al.}}, ``Near-field wideband beamfocusing optimization: A heuristic two-stage approach,'' in \emph{Proc. {IEEE} Global Commun. Conf. ({GLOBECOM})}, Kuala Lumpur, Malaysia, Dec. 2023, pp. 783--788.

\bibitem{zhang20196g}
{Z. Zhang \emph{et al.}}, ``6{G} wireless networks: Vision, requirements, architecture, and key technologies,'' \emph{IEEE Veh. Technol. Mag.}, vol.~14, no.~3, pp. 28--41, Sep. 2019.

\bibitem{kraus2002antennas}
J.~D. Kraus and R.~J. Marhefka, \emph{Antennas for all applications}.\hskip 1em plus 0.5em minus 0.4em\relax New York, NY, USA: McGraw-Hill, 2002.

\bibitem{liu2023near}
{Y. Liu \emph{et al.}}, ``Near-field communications: A tutorial review,'' \emph{IEEE Open J. Commun. Soc.}, vol.~4, pp. 1999--2049, Aug. 2023.

\bibitem{heath2016overview}
{R. W. Heat \emph{et al.}}, ``An overview of signal processing techniques for millimeter wave {MIMO} systems,'' \emph{{IEEE} J. Sel. Topics Signal Process.}, vol.~10, no.~3, pp. 436--453, Apr. 2016.

\bibitem{zhang2022beam}
{H. Zhang \emph{et al.}}, ``Beam focusing for near-field multiuser {MIMO} communications,'' \emph{{IEEE} Trans. Wireless Commun.}, vol.~21, no.~9, pp. 7476--7490, Sep. 2022.

\bibitem{yu2016alternating}
{X. Yu \emph{et al.}}, ``Alternating minimization algorithms for hybrid precoding in millimeter wave {MIMO} systems,'' \emph{{IEEE} J. Sel. Topics Signal Process.}, vol.~10, no.~3, pp. 485--500, Apr. 2016.

\bibitem{sohrabi2016hybrid}
F.~Sohrabi and W.~Yu, ``Hybrid digital and analog beamforming design for large-scale antenna arrays,'' \emph{{IEEE} J. Sel. Topics Signal Process.}, vol.~10, no.~3, pp. 501--513, Apr. 2016.

\bibitem{wang2018spatial}
B.~Wang, F.~Gao, S.~Jin, H.~Lin, and G.~Y. Li, ``Spatial-and frequency-wideband effects in millimeter-wave massive {MIMO} systems,'' \emph{{IEEE} Trans. Signal Process.}, vol.~66, no.~13, pp. 3393--3406, Jul. 2018.

\bibitem{lu2021near}
H.~Lu and Y.~Zeng, ``Near-field modeling and performance analysis for multi-user extremely large-scale {MIMO} communication,'' \emph{{IEEE} Commun. Lett.}, vol.~26, no.~2, pp. 277--281, Feb. 2022.

\bibitem{10339299}
{Y. Li \emph{et al.}}, ``Near-field beamforming optimization for holographic {XL-MIMO} multiuser systems,'' \emph{{IEEE} Trans. Commun.}, vol.~72, no.~4, pp. 2309--2323, Apr. 2024.

\bibitem{bacci2023spherical}
{G. Bacci \emph{et al.}}, ``Spherical wavefronts improve {MU-MIMO} spectral efficiency when using electrically large arrays,'' \emph{{IEEE} Wireless Commun. Lett.}, vol.~12, no.~7, pp. 1219--1223, Jul. 2023.

\bibitem{chen2020hybrid}
{Y. Chen \emph{et al.}}, ``Hybrid precoding for wideband millimeter wave {MIMO} systems in the face of beam squint,'' \emph{{IEEE} Trans. Wireless Commun.}, vol.~20, no.~3, pp. 1847--1860, Mar. 2021.

\bibitem{gao2021wideband}
F.~Gao, B.~Wang, C.~Xing, J.~An, and G.~Y. Li, ``Wideband beamforming for hybrid massive {MIMO} terahertz communications,'' \emph{{IEEE} J. Sel. Areas Commun.}, vol.~39, no.~6, pp. 1725--1740, Jun. 2021.

\bibitem{6531062}
M.~Longbrake, ``True time-delay beamsteering for radar,'' in \emph{Proc. IEEE Nat. Aerosp. Electron. Conf. (NAECON)}, Jul. 2012, pp. 246--249.

\bibitem{dai2022delay}
L.~Dai, J.~Tan, Z.~Chen, and H.~V. Poor, ``Delay-phase precoding for wideband {THz} massive {MIMO},'' \emph{{IEEE} Trans. Wireless Commun.}, vol.~21, no.~9, pp. 7271--7286, Sep. 2022.

\bibitem{nguyen2022joint}
D.~Q. Nguyen and T.~Kim, ``Joint delay and phase precoding under true-time delay constraints for {THz} massive {MIMO},'' in \emph{Proc. IEEE Int. Conf. Commun. (ICC)}, May 2022, pp. 3496--3501.

\bibitem{dovelos2021channel}
{K. Dovelos \emph{et al.}}, ``Channel estimation and hybrid combining for wideband terahertz massive {MIMO} systems,'' \emph{{IEEE} J. Sel. Areas Commun.}, vol.~39, no.~6, pp. 1604--1620, Jun. 2021.

\bibitem{yan2022energy}
{L. Yan \emph{et al.}}, ``Energy-efficient dynamic-subarray with fixed true-time-delay design for terahertz wideband hybrid beamforming,'' \emph{{IEEE} J. Sel. Areas Commun.}, vol.~40, no.~10, pp. 2840--2854, Oct. 2022.

\bibitem{myers2021infocus}
N.~J. Myers and R.~W. Heath, ``{InFocus}: A spatial coding technique to mitigate misfocus in near-field {LoS} beamforming,'' \emph{{IEEE} Trans. Wireless Commun.}, vol.~21, no.~4, pp. 2193--2209, Apr. 2022.

\bibitem{cui2021near}
M.~Cui and L.~Dai, ``Near-field wideband beamforming for extremely large antenna arrays,'' \emph{arXiv preprint arXiv:2109.10054}, 2021.

\bibitem{zhang2023deep}
Y.~Zhang and A.~Alkhateeb, ``Deep learning of near field beam focusing in terahertz wideband massive {MIMO} systems,'' \emph{{IEEE} Wireless Commun. Lett.}, vol.~12, no.~3, pp. 535--539, Mar. 2023.

\bibitem{zhang2022fast}
Y.~Zhang, X.~Wu, and C.~You, ``Fast near-field beam training for extremely large-scale array,'' \emph{{IEEE} Wireless Commun. Lett.}, vol.~11, no.~12, pp. 2625--2629, Dec. 2022.

\bibitem{wang2024performance}
Z.~Wang, X.~Mu, and Y.~Liu, ``Performance analysis of wideband near-field sensing ({NISE}),'' \emph{arXiv preprint arXiv:2404.05076}, 2024.

\bibitem{jornet2011channel}
J.~M. Jornet and I.~F. Akyildiz, ``Channel modeling and capacity analysis for electromagnetic wireless nanonetworks in the terahertz band,'' \emph{{IEEE} Trans. Wireless Commun.}, vol.~10, no.~10, pp. 3211--3221, Oct. 2011.

\bibitem{piesiewicz2007scattering}
{R. Piesiewicz \emph{et al.}}, ``Scattering analysis for the modeling of {THz} communication systems,'' \emph{IEEE Trans. Antennas Propag.}, vol.~55, no.~11, pp. 3002--3009, Nov. 2007.

\bibitem{priebe2013stochastic}
S.~Priebe and T.~Kurner, ``Stochastic modeling of {THz} indoor radio channels,'' \emph{{IEEE} Trans. Wireless Commun.}, vol.~12, no.~9, pp. 4445--4455, Sep. 2013.

\bibitem{5648370}
A.~M. Abbosh, ``Broadband fixed phase shifters,'' \emph{IEEE Microw. Wirel. Compon. Lett.}, vol.~21, no.~1, pp. 22--24, Jan. 2011.

\bibitem{4623722}
H.~Hashemi, T.-s. Chu, and J.~Roderick, ``Integrated true-time-delay-based ultra-wideband array processing,'' \emph{{IEEE} Commun. Mag.}, vol.~46, no.~9, pp. 162--172, Sep. 2008.

\bibitem{8248806}
{J.-C. Jeong \emph{et al.}}, ``A 6–18-{GHz} {GaAs} multifunction chip with 8-bit true time delay and 7-bit amplitude control,'' \emph{IEEE Trans. Microw. Theory Techn.}, vol.~66, no.~5, pp. 2220--2230, May 2018.

\bibitem{shi2020penalty}
Q.~Shi and M.~Hong, ``Penalty dual decomposition method for nonsmooth nonconvex optimization—part {I}: Algorithms and convergence analysis,'' \emph{{IEEE} Trans. Signal Process.}, vol.~68, pp. 4108--4122, Jun. 2020.

\bibitem{6832894}
{E. Björnson \emph{et al.}}, ``Optimal multiuser transmit beamforming: A difficult problem with a simple solution structure [lecture notes],'' \emph{IEEE Signal Process. Mag.}, vol.~31, no.~4, pp. 142--148, Jul. 2014.

\bibitem{shen2018fractional}
K.~Shen and W.~Yu, ``Fractional programming for communication systems—part {I}: Power control and beamforming,'' \emph{{IEEE} Trans. Signal Process.}, vol.~66, no.~10, pp. 2616--2630, May 2018.

\bibitem{shen2018fractional2}
------, ``Fractional programming for communication systems—part {II}: Uplink scheduling via matching,'' \emph{{IEEE} Trans. Signal Process.}, vol.~66, no.~10, pp. 2631--2644, May 2018.

\bibitem{hu2020iterative}
{Q. Hu \emph{et al.}}, ``Iterative algorithm induced deep-unfolding neural networks: Precoding design for multiuser {MIMO} systems,'' \emph{{IEEE} Trans. Wireless Commun.}, vol.~20, no.~2, pp. 1394--1410, Feb. 2021.

\bibitem{shlezinger2022model}
N.~Shlezinger, Y.~C. Eldar, and S.~P. Boyd, ``Model-based deep learning: On the intersection of deep learning and optimization,'' \emph{IEEE Access}, vol.~10, pp. 115\,384--115\,398, Nov. 2022.

\bibitem{chen2023beam}
K.~Chen, C.~Qi, C.-X. Wang, and G.~Y. Li, ``Beam training and tracking for extremely large-scale {MIMO} communications,'' \emph{{IEEE} Trans. Wireless Commun.}, early access, Oct. 2023. doi: 10.1109/TWC.2023.3324176.

\bibitem{tan2021wideband}
J.~Tan and L.~Dai, ``Wideband beam tracking in thz massive {MIMO} systems,'' \emph{{IEEE} J. Sel. Areas Commun.}, vol.~39, no.~6, pp. 1693--1710, Jun. 2021.

\bibitem{alkhateeb2015limited}
A.~Alkhateeb, G.~Leus, and R.~W. Heath, ``Limited feedback hybrid precoding for multi-user millimeter wave systems,'' \emph{{IEEE} Trans. Wireless Commun.}, vol.~14, no.~11, pp. 6481--6494, Nov. 2015.

\bibitem{sun2016majorization}
Y.~Sun, P.~Babu, and D.~P. Palomar, ``Majorization-minimization algorithms in signal processing, communications, and machine learning,'' \emph{{IEEE} Trans. Signal Process.}, vol.~65, no.~3, pp. 794--816, Feb. 2017.

\bibitem{christensen2008weighted}
{S. S. Christensen \emph{et al.}}, ``Weighted sum-rate maximization using weighted {MMSE} for {MIMO-BC} beamforming design,'' \emph{{IEEE} Trans. Wireless Commun.}, vol.~7, no.~12, pp. 4792--4799, Dec. 2008.

\end{thebibliography}

\end{document}